\documentclass[11pt,oneside]{amsart}
\usepackage{geometry}                % See geometry.pdf to learn the layout options. There are lots.
\geometry{a4paper}                   % ... or a4paper or a5paper or ... 
\usepackage[unicode=true]{hyperref}
\hypersetup{
    colorlinks=true,
    citecolor=cyan,
    linkcolor=magenta,
    urlcolor=magenta}
\usepackage{graphicx}
\usepackage{caption}
\usepackage{datetime}
\usepackage{subcaption} 
\usepackage{amssymb}
\usepackage{epstopdf}
\usepackage{color}
\usepackage{setspace}
\usepackage[authoryear,comma]{natbib}
\usepackage[foot]{amsaddr}
\usepackage{mathtools}
\usepackage{enumitem}
\usepackage{array}
\usepackage{booktabs}
\usepackage{bbm}
\usepackage{comment}
\usepackage{multirow}
\usepackage[foot]{amsaddr}
\usepackage{lipsum}
\pagestyle{plain}

\usepackage{algorithm}
\usepackage{algpseudocode}
\newcommand{\p}[1]{ {\vec{\mathbf{#1}}} }

\newtheorem{theorem}{Theorem}
\newtheorem{proposition}{Proposition}
\newtheorem{lemma}{Lemma}
\newtheorem{corollary}{Corollary}

\theoremstyle{definition}

\newtheorem{remark}{Remark}
\DeclarePairedDelimiter\ceil{\lceil}{\rceil}
\DeclarePairedDelimiter\floor{\lfloor}{\rfloor}

\usepackage{tikz}
\usetikzlibrary{arrows,calc,shapes,positioning,decorations.pathmorphing,patterns}
\usepackage{pgfplots}

\makeatletter
\renewcommand{\email}[2][]{%
  \ifx\emails\@empty\relax\else{\g@addto@macro\emails{,\space}}\fi%
  \@ifnotempty{#1}{\g@addto@macro\emails{\textrm{(#1)}\space}}%
  \g@addto@macro\emails{#2}%
}

\newcommand\blfootnote[1]{%
  \begingroup
  \renewcommand\thefootnote{}\footnote{#1}%
  \addtocounter{footnote}{-1}%
  \endgroup
}
\makeatother

\definecolor{darkgreen}{RGB}{0,127,0}

\newcolumntype{C}[1]{>{\centering\let\newline\\\arraybackslash\hspace{0pt}}m{#1}}

\DeclareGraphicsRule{.tif}{png}{.png}{`convert #1 `dirname #1`/`basename #1 .tif`.png}

\begin{document}

\title[]{Best-response dynamics, playing sequences, and convergence to equilibrium in random games}

\date{\today}

\author{Torsten Heinrich$^{1,2,3}$, Yoojin Jang$^{2,4}$, Luca Mungo$^{2,5}$, Marco Pangallo$^{6}$, Alex Scott$^{5}$, Bassel Tarbush$^{7}$, Samuel Wiese$^{2,4}$}
\blfootnote{$^1$Faculty for Economics and Business Administration, Chemnitz University of Technology, Chemnitz, Germany}
\blfootnote{$^2$Institute for New Economic Thinking at the Oxford Martin School, University of Oxford, Oxford, UK}
\blfootnote{$^3$Oxford Martin Programme on Technological and Economic Change (OMPTEC), Oxford Martin School, University of Oxford, Oxford, UK}
\blfootnote{$^4$Department of Computer Science, University of Oxford, Oxford, UK}
\blfootnote{$^5$Mathematical Institute, University of Oxford, Oxford, UK}
\blfootnote{$^6$Institute of Economics and EMbeDS Department, Sant'Anna School of Advanced Studies, Pisa, Italy}
\blfootnote{$^7$Department of Economics, University of Oxford, Oxford, UK}
\blfootnote{\emph{Email addresses}: \texttt{torsten.heinrich@wirtschaft.tu-chemnitz.de}, \texttt{yjluca98@gmail.com}, \texttt{luca.mungo@maths.ox.ac.uk}, \texttt{marco.pangallo@santannapisa.it}, \texttt{scott@maths.ox.ac.uk}, \texttt{bassel.tarbush@economics.ox.ac.uk}, \texttt{samuel.wiese@cs.ox.ac.uk}}
 
\thanks{We thank Doyne Farmer for useful comments at the early stages of this project. We acknowledge funding from Baillie Gifford (Luca Mungo), the James S Mc Donnell Foundation (Marco Pangallo) and the Foundation of German Business (Samuel Wiese). Research supported by EPSRC grant EP/V007327/1 (Alex Scott).}

\ \\
\begin{abstract}
We analyze the performance of the best-response dynamic across all normal-form games using a random games approach. The playing sequence---the order in which players update their actions---is essentially irrelevant in determining whether the dynamic converges to a Nash equilibrium in certain classes of games (e.g. in potential games) but, when evaluated across all possible games, convergence to equilibrium depends on the playing sequence in an extreme way. Our main asymptotic result shows that the best-response dynamic converges to a pure Nash equilibrium in a vanishingly small fraction of all (large) games when players take turns according to a fixed cyclic order. By contrast, when the playing sequence is random, the dynamic converges to a pure Nash equilibrium if one exists in almost all (large) games.\\
\textsc{JEL codes}: C62, C72, C73, D83.\\
\textsc{Keywords}: Best-response dynamics, equilibrium convergence, random games.\\

\end{abstract}
\maketitle

\newpage

\onehalfspace

\section{Introduction}
The best-response dynamic is a ubiquitous iterative game-playing process in which, at each time step, players myopically select actions that are a best-response to the actions last chosen by all other players. The literature at large has established the equilibrium convergence properties of the best-response dynamic in games with specific payoff structures; particularly in potential games \citep{monderer1996potential}, but also in weakly acyclic games \citep{fabrikant2013structure}, aggregative games \citep{dindovs2006better}, and quasi-acyclic games \citep{friedman2001learning,takahashi2002pure}. So known results are restricted to special cases. The performance of the best-response dynamic in the class of \emph{all} games remains to be established. In this paper, we consider the question of whether the best-response dynamic converges to a pure Nash equilibrium in a small or large fraction of all possible normal-form games.

To answer our question, we take a ``random games'' approach: we determine whether the best-response dynamic converges to a pure Nash equilibrium in a game drawn at random from among all possible games. The random games approach has a long history in game theory (since \citealp{goldman1957probability,goldberg1968probability}, and \citealp{dresher1970probability}), and has been used to address questions regarding the prevalence of Nash equilibria \citep{powers1990limiting,stanford1995note,stanford1996limit,stanford1997distribution,cohen1998cooperation,stanford1999number,mclennan2005expected,mclennan2005asymptotic,takahashi2008number,kultti2011distribution,daskalakis2011connectivity,quattropani2020efficiency},  the prevalence of rationalizable strategies \citep{pei2019rationalizable}, convergence to equilibrium (\citealp{pangallo2019best}, \citealp*{amiet2021pure}, \citealp*{amiet2021better}, \citealp*{wiese2022frequency}), and the prevalence of dominance solvable games \citep{alon2020dominance}.\footnote{The majority of the literature has focused on normal-form games. \cite{arieli2016random} study random extensive form games.} A guiding principle of the approach is that, since the property of interest (e.g. existence of Nash equilibrium, convergence to Nash equilibrium, dominance solvability) does not hold in all games, one can at least determine how \emph{likely} the property is to hold in the class of all games. To do so, one defines a probability distribution over all games, and computes the probability that a game drawn randomly according to this distribution has the desired property. 

The \emph{playing sequence}---the order in which players update their actions---has an important role in our analysis. We largely focus on two specific playing sequences in this paper. At one extreme, we consider the random playing sequence, where players take turns to play one at a time and the next player to play is chosen uniformly at random from among all players. At the other extreme, we consider a natural deterministic counterpart to the random sequence, which we refer to as the clockwork playing sequence, where players take turns to play one at a time according to a fixed cyclic order. The best-response dynamic under the random playing sequence is widely studied. It is often of interest in population and evolutionary games \citep{sandholm2010population}, and its properties have been analyzed in a variety of games with specific payoff structures.\footnote{It has been analyzed in anonymous games \citep{babichenko2013best}, near-potential games \citep{candogan2013dynamics}, potential games \citep{christodoulou2012convergence,coucheney2014general,swenson2018best,durand2019distributed}, and games on a lattice \citep{blume1993statistical}. ``Sink'' equilibria are studied in \citep{goemans2005sink,mirrokni2009complexity}.} The best-response dynamic under the clockwork playing sequence appears most frequently in the algorithmic game theory literature. Its properties have inter alia been studied in auctions \citep{nisan2011best}, job scheduling \citep{berger2011dynamic}, network formation games \citep{chauhan2017selfish}, and it has been used for equilibrium selection in potential games \citep{boucher2017selecting}. Using the random games approach, \cite{durand2016complexity} show that, in expectation, convergence to equilibrium in potential games is faster under the clockwork playing sequence than under any other playing sequence.

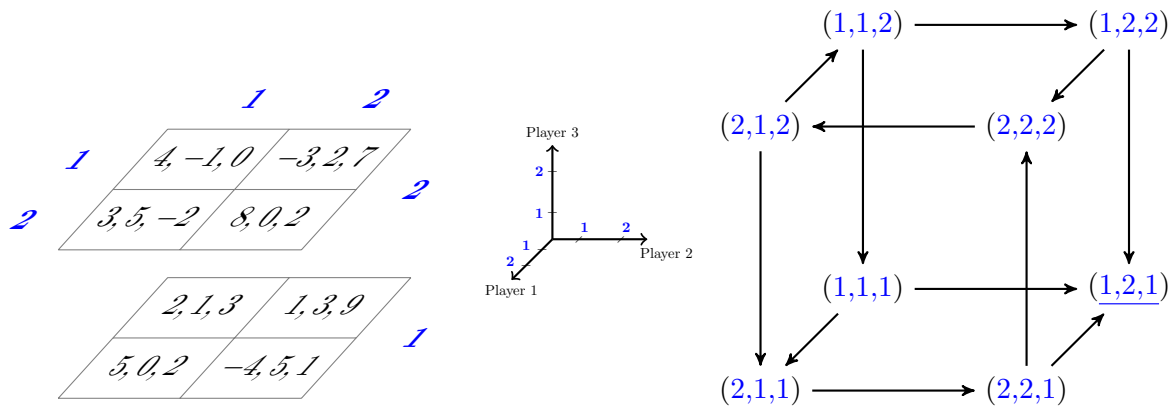
\begin{figure}
\centering
\begin{tikzpicture}
\begin{scope}[scale=0.8,xshift=-50,yshift=70,every node/.append style={yslant=0,xslant=0.9},yslant=0,xslant=0.9]
\draw[xstep=2cm,ystep=1cm,color=gray] (0,0) grid (4,2);
% payoffs
\node at (1,1.5) {$4,-1,0$}; \node at (3,1.5) {$-3,2,7$};
\node at (1,0.5) {$3,5,-2$}; \node at (3,0.5) {$8,0,2$};
% actions
\node[blue] at (-1,0.5) {$\mathbf{2}$};
\node[blue] at (-1,1.5) {$\mathbf{1}$};
\node[blue] at (3,2.5) {$\mathbf{2}$};
\node[blue] at (1,2.5) {$\mathbf{1}$};
\node[blue] at (5,1) {$\mathbf{2}$};
\end{scope}

\begin{scope}[scale=0.8,xshift=-50,yshift=0,every node/.append style={yslant=0,xslant=0.9},yslant=0,xslant=0.9]
\draw[xstep=2cm,ystep=1cm,color=gray] (0,0) grid (4,2);
% payoffs
\node at (1,1.5) {$2,1,3$};  \node at (3,1.5) {$1,3,9$};
\node at (1,0.5) {$5,0,2$}; \node at (3,0.5) {$-4,5,1$};
% actions
\node[blue] at (5,1) {$\mathbf{1}$};
\end{scope}

\begin{scope}[xshift=200,yshift=-20,
scale=3.5,
>=stealth',
roundnode/.style={rectangle, draw=white, fill=white,inner sep=0,outer sep=0}]   
\foreach \x in {1,2}
\foreach \y in {1,2}
\foreach \z in {1,2}
{\node[] (\z\x\y) at (\x,\y,\z) {({\color{blue}\z,\x,\y})};} %(\z,\x,\y)
% bottom face P1
\path[->] (111) edge [thick ]  (211);
\path[->] (221) edge [thick ]  (121);
% bottom face P2
\path[->] (111) edge [thick ]  (121);
\path[->] (211) edge [thick ]  (221);

% top face P1
\path[->] (212) edge [thick ]  (112);
\path[->] (122) edge [thick ]  (222);
% top face P2
\path[->] (112) edge [thick ]  (122);
\path[->] (222) edge [thick ]  (212);

% between faces
\path[->] (212) edge [thick ]  (211);
\path[->] (112) edge [thick ]  (111);
\path[->] (122) edge [thick ]  (121);
\path[->] (221) edge [thick ]  (222);

\node[roundnode] (x) at (121) {({\color{blue} \underline{1,2,1}})};
\end{scope}

%%%%%%%%%%%%%%%
\begin{scope}[xshift=145,yshift=60,scale=0.5,every node/.style={transform shape}]
%axes
\draw[->,thick] (0,0,0) -- (2.5,0,0) node [pos=1.2,below,rotate=0,yshift=-0.1cm] {Player 2};
\draw[->,thick] (0,0,0) -- (0,2.5,0) node [pos=1,above,rotate=0,yshift=0cm] {Player 3};
\draw[->,thick] (0,0,0) -- (0,0,2.8) node [pos=1,below,rotate=0,yshift=0cm] {Player 1};

%player 2
\node[rotate=45] at (0.7,0,0) {$-$}; \node[rotate=45] at (1.8,0,0) {$-$};
\node[above right,yshift=0.05cm,xshift=-0.1cm,blue] at (0.7,0,0) {$\mathbf{1}$};
\node[above right,yshift=0.05cm,xshift=-0.1cm,blue] at (1.8,0,0) {$\mathbf{2}$};

%player 1
\node[rotate=0] at (0,0,0.7) {$-$}; \node[rotate=0] at (0,0,1.8) {$-$};
\node[left,yshift=0.1cm,xshift=-0.17cm,blue] at (0,0,0.7) {$\mathbf{1}$};
\node[left,yshift=0.1cm,xshift=-0.17cm,blue] at (0,0,1.8) {$\mathbf{2}$};

%player 3
\node[rotate=0] at (0,0.7,0) {$-$}; \node[rotate=0] at (0,1.8,0) {$-$};
\node[left,yshift=0cm,xshift=-0.1cm,blue] at (0,0.7,0) {$\mathbf{1}$};
\node[left,yshift=0cm,xshift=-0.1cm,blue] at (0,1.8,0) {$\mathbf{2}$};
\end{scope}
\end{tikzpicture}
\caption{Illustration of a $3$-player game with $2$ actions per player (left) and its associated best-response digraph (right). The axes shown in the center give us our coordinate system: player 1 selects rows (along the depth), player 2 selects columns (along the width), and player 3 selects levels (along height). In the left-hand panel, the payoffs of players 1, 2, and 3 are listed in that order. The unique pure Nash equilibrium at the profile $(1,2,1)$ is a sink of the digraph and is underlined.}\label{fig:example}
\end{figure}

The playing sequence is essentially irrelevant in determining whether the best-response dynamic converges to equilibrium in potential games---which is the focus of a large part of the literature---but it is a key determinant of the dynamic's convergence properties in non-potential games. To see this, consider the 3-player game shown in the left-hand panel of Figure \ref{fig:example} and its associated \emph{best-response digraph} shown in the right-hand panel. Best-response digraphs are a commonly used reduced-form representation of a game in which the vertices are the action profiles and the directed edges correspond to the players' best-responses (e.g. see \citealp[Chapter 7]{young2020individual}, or \citealp{pangallo2019best}). It is easy to show that potential games have acyclic best-response digraphs, which implies that the playing sequence plays almost no role: as long as each player with a remaining payoff-improving action has a chance to play---which is the case for both the random and the clockwork playing sequences---the dynamic must eventually end at a sink of the digraph, i.e.\ a Nash equilibrium of the game.\footnote{Any such playing sequence may affect the path taken to equilibrium but not whether the path ends at a sink.} In contrast, in the non-potential game shown in Figure \ref{fig:example}, convergence is dependent on the playing sequence: with initial profile $(1,1,1)$, the random sequence best-response dynamic must eventually converge to the Nash equilibrium, whereas the clockwork sequence best-response dynamic with cyclic player order 1-2-3-1-... will remain stuck cycling on the four profiles on the front face of the cube forever. 

Since we are assessing the performance of the best-response dynamic over the class of \emph{all} games (including non-potential games), it is necessary for us to be explicit about the details of the playing sequence. There are, of course, many possible playing sequences,\footnote{The concept of a playing sequence is closely related to ``revision functions'' in \cite{durand2016complexity} and ``schedulers'' in \cite{apt2015classification}. Simultaneous updating by all players at each time step is studied in \cite{quint1997dumb} for 2-player games and in \cite{kash2011multiagent} for anonymous games. \cite{feldman2012convergence} study the case in which the sequence of play depends on current payoffs. \cite{feldman2017efficiency} study the dynamic inefficiency of the best-response dynamic under different playing sequences.} but our focus on random vs.\ clockwork suffices for our main finding: whether the best-response dynamic converges to equilibrium in a small or large fraction of all games depends on the playing sequence in an extreme way. Broadly, we show that under a clockwork playing sequence, the fraction of all $n$-player games  in which the best-response dynamic converges to a pure Nash equilibrium goes to 0 as the number of players and/or actions gets large. By contrast, under a random playing sequence, the fraction of all $n$-player games with a pure Nash equilibrium in which the best-response dynamic converges to a pure Nash equilibrium goes to 1 as the number of players and/or actions gets large (when $n>2$). 

That the best-response dynamic converges less often under a clockwork than under a random playing sequence is perhaps unsurprising since the clockwork sequence will have more difficulty escaping best-response cycles. We therefore expect the probability of convergence to equilibrium for the clockwork sequence to be \textit{less} than it is for the random sequence. However, the resulting extreme jump in the asymptotic equilibrium convergence frequency from 1 to 0 is rather striking. Since most games have digraphs that contain cycles, our contribution can be seen as quantifying the fact that a clockwork playing sequence is \emph{very} likely to become trapped in such cycles, whereas the random playing sequence is \emph{very} likely to escape them.

We now provide a brief technical overview of our methods and results. To generate  games at random, we follow the majority of papers in the 'random games' literature by drawing each player's payoff at each action profile independently according to an arbitrary atomless distribution.\footnote{See \cite{goldberg1968probability,stanford1999number,berg1999entropy,rinott2000number,galla2013complex,sanders2018prevalence,pangallo2019best} for work on random games with payoff correlations.} This induces a uniform distribution over best-response digraphs, and it is in this sense that we can claim convergence in a large or small fraction of all games. The probability of convergence to a pure Nash equilibrium can be reduced to working out the probability that the best-response path initiated at a random vertex hits a sink of the randomly drawn digraph.\footnote{\cite{wiese2022frequency} refer to a game as being ``convergent'' if, from \emph{every} initial vertex, the clockwork best-response dynamic converges to a pure Nash equilibrium. They then show, for each $k \geq 1$, that the probability that a randomly drawn game is convergent and has exactly $k$ pure Nash equilibria is asymptotically zero in large games. Our asymptotic results imply those of \cite{wiese2022frequency}, but not vice versa. Here is why. In Theorem \ref{thm}, we derive upper and lower bounds on the probability that the clockwork best-response dynamic initiated at an arbitrarily chosen vertex converges to a pure Nash equilibrium. But because equilibrium convergence from \emph{every} starting vertex (the focus of \citealp{wiese2022frequency}) implies convergence to equilibrium from an arbitrarily chosen vertex (the focus of our paper) and not conversely, the upper bound that we find in Theorem \ref{thm} is also an upper bound for the probabilities derived in \cite{wiese2022frequency}. Moreover, since we show that the upper bound in Theorem \ref{thm} goes to zero in large games, our asymptotic result  implies the asymptotic results of \cite{wiese2022frequency}, but not vice versa.}

In Section \ref{sec:n>2}, we show that the probability that the clockwork best-response dynamic converges to a pure Nash equilibrium in a game with $n>2$ players and $m_i\geq 2$ actions per player $i$ is, up to a polynomial factor, of order $1/\sqrt{q_{n,\mathbf{m}}}$, where $q_{n,\mathbf{m}} := \frac{\prod_{i=1}^n m_i}{\max_i m_i}$ is the minimal number of strategic environments in the game (i.e. the minimal number of combinations of actions of all but one player). The proof relies on a coupling argument that makes it possible to deal with the path-dependence of the best-response dynamic. The result has two implications. (i) For large $q_{n,\mathbf{m}}$, the probability of convergence is determined by the value of a single parameter, namely, the minimal number of possible strategic environments,  so all games with an identical minimal number of strategic environments have similar asymptotic probabilities of convergence to equilibrium. This is reflected in our simulations even for small values of $q_{n,\mathbf{m}}$. (ii) When the number of players $n$ and/or the number of actions per player is large for at least two players (implying $q_{n,\mathbf{m}}  \rightarrow \infty$), the probability that the clockwork best-response dynamic converges to a pure Nash equilibrium goes to zero. This is in stark contrast with the convergence properties of the random sequence best-response dynamic. 

In Section \ref{sec:n=2}, we provide more detailed theoretical results for games with $n=2$ players. In particular, we provide results on game duration, and we derive an exact expression for the probability that the best-response dynamic converges to a (best-response) cycle of given length at a particular time. As a special case, we obtain the exact probability that the clockwork best-response dynamic converges to a pure Nash equilibrium in 2-player games with $m_i$ actions per player. Unlike in games with $n>2$ players in which the clockwork and random sequences behave very differently from each other, the probability of convergence to equilibrium is the same for the random and clockwork playing sequences in $2$-player games. Furthermore, when $m_1=m_2=m$, we show that this probability is asymptotically $\sqrt{\pi / m}$ when $m$ is large.

Section \ref{sec:simulations} present our simulation results. We investigate the extent to which our asymptotic analytical results also hold for small numbers of players and/or actions. Additionally, we investigate the behavior of playing sequences that interpolate between the extremes of clockwork and random playing sequences.

\section{Best-response dynamics in games}
\label{sec:definitions}

\subsection{Games} A \emph{game} with $n \geq 2$ players and $m_i \geq 2$ actions per player $i$ is a tuple
$$g_{n,\mathbf{m}} := ([n],\{[m_i]\}_{i \in [n]},\{u_i\}_{i \in [n]}), $$
where $\mathbf{m}:=(m_1,...,m_n)$, $[n]:= \{1,...,n\}$ is the set of players, and each player $i \in [n]$ has a set of actions $[m_i]:=\{1,...,m_i\}$ and a payoff function $u_i : \mathcal{M} \rightarrow \mathbb{R}$, where $\mathcal{M}:=\times_{i \in [n]}[m_i]$.

An \emph{action profile} is a vector of actions $\mathbf{a}=(a_1,...,a_n)\in \mathcal{M}$ that lists the action taken by each player. An \emph{environment} for player $i$ is a vector $\mathbf{a}_{-i} \in \mathcal{M}_{-i}:=\times_{j \in [n]\setminus \{i\}}[m_j]$ that lists the action taken by each player but $i$. A \textit{best-response correspondence} $b_i$ for player $i $ is a mapping from the set of environments for player $i$ to the set of all non-empty subsets of $i$'s actions and is defined by
\begin{equation*}
b_i(\mathbf{a}_{-i}) := \arg \max_{a_i \in [m_i]} u_i(a_i, \mathbf{a}_{-i}) .
\end{equation*}
In the rest of this paper, we consider only games in which for each player $i$ and environment $\mathbf{a}_{-i}$, the best-response action is unique. This is the case for games in which there are no ties in payoffs.\footnote{There are no ties in payoffs if for all $i \in [n]$, all $\mathbf{a}_{-i}$, and all $a_i \neq a_i'$, $u_i(a_i ,\mathbf{a}_{-i}) \neq u_i(a_i' ,\mathbf{a}_{-i})$.}

An action profile $\mathbf{a} \in \mathcal{M}$ is a \emph{pure Nash equilibrium} (PNE) if for all $i \in [n]$ and all $a_i \in [m_i]$, $u_i (\mathbf{a}) \geq u_i(a_i,\mathbf{a}_{-i})$. Equivalently, $\mathbf{a}$ is a PNE if each player $i \in [n]$ is playing their (assumed unique) best-response action i.e.\ $a_i = b_i(\mathbf{a}_{-i})$. Denote the set of PNE of the game $g_{n,\mathbf{m}}$ by $\text{PNE}(g_{n,\mathbf{m}})$ and let $\# \text{PNE}(g_{n,\mathbf{m}})$ denote the cardinality of this set.

\subsection{Best-response digraphs}
The best-response structure of a game $g_{n,\mathbf{m}}$ can be represented by a \emph{best-response digraph} $\mathcal{D}(g_{n,\mathbf{m}})$ whose vertex set is the set of action profiles $\mathcal{M}$ and whose edges are constructed as follows: for each $i \in [n]$ and each pair of distinct vertices $\mathbf{a}=(a_i,\mathbf{a}_{-i})$ and $\mathbf{a}' = (a_i',\mathbf{a}_{-i})$, place a directed edge from $\mathbf{a}$ to $\mathbf{a}'$ if and only if $a_i'$ is player $i$'s best-response to environment $\mathbf{a}_{-i}$,  i.e.\ $a_i' = b_i( \mathbf{a}_{-i} )$. There are edges only between action profiles that differ in exactly one coordinate. A profile $\mathbf{a}$ is a PNE of $g_{n,\mathbf{m}}$ if and only if it is a sink of the best-response digraph $\mathcal{D}(g_{n,\mathbf{m}})$. It is easy to show that potential games have acyclic best-response digraphs.\footnote{A game is a (generalized ordinal) potential game if there exists a function $\rho: \mathcal{M} \rightarrow \mathbb{R}$ such that for all $\mathbf{a} \in \mathcal{M}$, $i \in [n]$, and $a_i' \in [m_i]$, $u_i(a_i',\mathbf{a}_{-i}) > u_i(\mathbf{a})$ implies $\rho(a_i',\mathbf{a}_{-i}) > \rho(\mathbf{a})$ \citep{monderer1996potential}. If the best-response digraph has a cycle $(\mathbf{a}_1 \cdots \mathbf{a}_k)$ then the potential function would need to satisfy $\rho(\mathbf{a}_k)>...>\rho(\mathbf{a}_1)>\rho(\mathbf{a}_k)$, a contradiction.}

\subsection{Best-response dynamics}
\label{sec:brdynamic}
We now consider games played over time, with each player in turn myopically best-responding to their current environment.

A \emph{playing sequence} function $s :  \mathbb{N} \rightarrow [n]$ determines whose turn it is to play at each time $t \in \mathbb{N}$, where $\mathbb{N}$ denotes the set of positive integers.\footnote{Our results hold for any permutation of player labels.} We will be interested in two specific playing sequences. The \emph{clockwork} playing sequence is defined by $s_\texttt{c}(t) := 1 + (t-1) \bmod n$, so player 1 plays at time 1, followed by player 2, then 3, and so on until player $n$, and then the sequence returns to player 1, and so on. The \emph{random} playing sequence $s_\texttt{r}$ is determined as follows: for each $t \in \mathbb{N}$, draw $s_\texttt{r}(t)$ uniformly at random from $[n]$. So, at each time, the player playing at that time is drawn uniformly at random from among all players. It is easy to see that, starting from any initial profile, the random sequence best-response dynamic must eventually converge to the PNE of the game shown in Figure \ref{fig:example}, but it is by no means guaranteed to converge to a PNE in all games.\footnote{It is, for example, easy to construct games with a PNE in which there is a cluster of non-PNE profiles that, once visited, cannot be escaped by the random sequence best-response dynamic. \citet*{amiet2021better} refer to such clusters as ``best-response traps''.} In Sections \ref{sec:definitions} and \ref{sec:theory} we restrict our attention to playing sequences $s \in \{s_\texttt{c},s_\texttt{r}\}$.

A \emph{path} $\langle \p{a}, s \rangle$ is an infinite sequence of action profiles $\p{a} =(\mathbf{a}^0, \mathbf{a}^1,...)$ and an associated playing sequence function $s  : \mathbb{N} \rightarrow [n]$ satisfying the constraint that only one player changes her action at a time, i.e.\ $\mathbf{a}_{-s(t)}^t = \mathbf{a}_{-s(t)}^{t-1}$ for each $t \in \mathbb{N}$. So only the action of player $s(t)$ is allowed to differ between profiles $\mathbf{a}^{t-1}$ and $\mathbf{a}^t$ along a path.

The \emph{best-response dynamic} with playing sequence $s : \mathbb{N} \rightarrow [n]$ on a game $g_{n,\mathbf{m}}$ initiated at the action profile $\mathbf{a}^0$ is the following process: set the initial action profile to $\mathbf{a}^0$ and, at each time $t \in \mathbb{N}$, player $s(t)$ myopically plays her best-response $a_i^{t} = b_i(\mathbf{a}_{-i}^{t-1})$ to her current environment $\mathbf{a}_{-s(t)}^{t-1}$. The best-response dynamic effectively generates a path $\langle \p{a}, s \rangle$ by traveling along the edges of the best-response digraph $\mathcal{D}(g_{n,\mathbf{m}})$ in direction $s(t)$ at step $t$ starting from the initial profile $\mathbf{a}^0$.\footnote{More precisely, the infinite sequence of actions $\p{a}$ is determined as follows: if player $s(t)$ is already best responding then $\mathbf{a}^{t-1}$ does not point to any vertex $(a_{s(t)}', \mathbf{a}^{t-1}_{-s(t)}) \neq \mathbf{a}^{t-1}$ and the next profile in the sequence is $\mathbf{a}^{t-1}$ itself, i.e.\ $\mathbf{a}^{t} = \mathbf{a}^{t-1}$; otherwise, if player $s(t)$ is not already playing her best response then travel to the vertex that corresponds to her playing her best-response action, i.e.\ set $\mathbf{a}^{t} = (a_{s(t)}', \mathbf{a}^{t-1}_{-s(t)})$ where $(a_{s(t)}', \mathbf{a}^{t-1}_{-s(t)}) \neq \mathbf{a}^{t-1}$ is the unique vertex that $\mathbf{a}^{t-1}$ points to.}

\subsection{Convergence}\label{subsec:convergence}
For any path $\langle \p{a}, s \rangle$ and set of action profiles $\mathcal{A}\subseteq \mathcal{M}$ the \emph{hitting time} $H_{\langle \p{a}, s \rangle}(\mathcal{A}):=\inf \{t \in \mathbb{N} : \mathbf{a}^t \in \mathcal{A}\}$ is the first time $t\geq 1$ at which some element of the sequence $\p{a}$ is in, or (first) hits, the set $\mathcal{A}$ ($\inf$ is the infimum operator and we use the convention that $\inf \emptyset= \infty$).\footnote{We also say that the path hits $\mathcal{A}$ \emph{by} $t$ if it hits $\mathcal{A}$ at time $\tau$ with $\tau \leq t$, and the path hits $\mathcal{A}$ \emph{before} (\emph{after}) $t$ if it hits $\mathcal{A}$ at time $\tau < t$ ($\tau > t$).}  We say that the $s$-sequence best-response dynamic on game $g_{n,\mathbf{m}}$ initiated at $\mathbf{a}^0$ \emph{converges} to a PNE if its path $\langle \p{a}, s \rangle$ hits $\text{PNE}(g_{n,\mathbf{m}})$ in finite time. Clearly, if a path hits a PNE at some time $t$, it stays there forever after.

\subsection{Best-response dynamics on random games}
We generate random games by drawing all payoffs at random: for each $\mathbf{a} \in \mathcal{M}$ and $i\in [n]$, the payoff $U_i(\mathbf{a})$ is a random number that is drawn from an atomless distribution $\mathbb{P}$. The draws are independent across all $i\in[n]$ and $\mathbf{a} \in \mathcal{M}$. The distribution $\mathbb{P}$ ensures that any ties in payoffs have zero measure, so almost surely each environment has a unique best-response for each player. A random game drawn in this way is denoted by $G_{n,\mathbf{m}} := ( [n] , \{[m_i]\}_{i \in [n]} , \{U_i\}_{i \in [n]} )$.  

The best-response dynamic on random games is described by Algorithm \ref{alg2}. We randomly draw a game and run the best-response dynamic on the drawn game, starting from a randomly drawn initial profile $\mathbf{A}^0$.\footnote{We draw the initial profile $\mathbf{A}^0$ uniformly at random from among all profiles, but this is merely a stylistic choice: since the game itself is drawn at random, the choice of initial condition is actually irrelevant, i.e. our results would not change if we had arbitrarily fixed the initial profile to some specific value.}  Doing so induces a distribution over paths and PNE sets.
\begin{algorithm}
\caption{$s$-sequence best-response dynamic on $G_{n,\mathbf{m}}$\label{alg2}} 
\texttt{
\begin{enumerate}[topsep=0pt,leftmargin=*]
\item For all $i \in [n]$ and $\mathbf{a} \in \mathcal{M}$ draw $U_i(\mathbf{a})$ at random according to $\mathbb{P}$
\item Draw $\mathbf{A}^0$ uniformly at random from $\mathcal{M}$
\item For $t \in \mathbb{N}$:
\begin{enumerate}
\item Set $i=s(t)$
\item Set $\mathbf{A}_{-i}^{t} = \mathbf{A}_{-i}^{t-1}$
\item Set $A_i^t = B_i(\mathbf{A}_{-i}^{t-1})$ where $B_i(\mathbf{A}_{-i}^{t-1}):=\arg \max_{x_i \in [m_i]} U_i(x_i, \mathbf{A}_{-i}^{t-1})$
\end{enumerate}
\end{enumerate}
}
\end{algorithm}

The notion of convergence given in Section \ref{subsec:convergence} applies here. Namely, the $s$-sequence best-response dynamic on game $G_{n,\mathbf{m}}$ (and initial condition $\mathbf{A}^0$) converges to a PNE if its path $\langle \p{A}, s\rangle$ (generated according to Algorithm \ref{alg2}) hits $\text{PNE}(G_{n,\mathbf{m}})$ in finite time.

\section{Theoretical results}
\label{sec:theory}

In this section, we present the theoretical results for best-response dynamics in random games. In Section \ref{sec:n>2} we focus on games with $n>2$ players. In this case, we find that best-response dynamics behave very differently under clockwork vs.\ random playing sequences. Most of our results on the probability of convergence to equilibrium are asymptotic. In Section \ref{sec:n=2} we focus on games with $n=2$ players. In this case, the probability of convergence to equilibrium is the same under both clockwork and random playing sequences. Furthermore, we are able to provide asymptotic as well as \emph{exact} results for game duration and for the probability of convergence to equilibrium.

The quantity
$$q_{n,\mathbf{m}} := \frac{\prod_{i \in [n]} m_i}{\max_{i \in [n]} m_i}$$
is central to our results and it appears frequently in the literature on random games (for example, see \citealp{dresher1970probability}, or \citealp{rinott2000number}). As summarized in the proposition below, the probability that there is a pure Nash equilibrium is asymptotically $1 - \exp\{-1\} \approx 0.63$ as $q_{n,\mathbf{m}}$ gets large.
\begin{proposition}[\citealp{rinott2000number}]\label{prop:NE2}
\begin{equation*}
\lim_{q_{n,\mathbf{m}} \rightarrow \infty} \Pr\left[\# \text{\emph{PNE}}(G_{n,\mathbf{m}}) \geq 1\right] = 1 - \exp\{-1\} .
\end{equation*}
\end{proposition}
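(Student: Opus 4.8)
The plan is to establish that $N:=\#\text{PNE}(G_{n,\mathbf{m}})$ is asymptotically Poisson with mean $1$ as $q_{n,\mathbf{m}}\to\infty$, via the method of factorial moments, and then read off $\Pr[N\ge 1]=1-\Pr[N=0]\to 1-e^{-1}$. Throughout, write $M:=\prod_{i\in[n]}m_i=|\mathcal{M}|$, and for each profile $\mathbf{a}\in\mathcal{M}$ let $X_{\mathbf{a}}$ equal $1$ if $\mathbf{a}$ is a PNE of $G_{n,\mathbf{m}}$ and $0$ otherwise, so that $N=\sum_{\mathbf{a}\in\mathcal{M}}X_{\mathbf{a}}$. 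The first step is the first moment: $\mathbf{a}$ is a PNE exactly when, for every player $i$, $U_i(a_i,\mathbf{a}_{-i})$ is the maximum of the $m_i$ i.i.d.\ atomless variables $\{U_i(x,\mathbf{a}_{-i})\}_{x\in[m_i]}$, an event of probability $1/m_i$; since for distinct players these events involve disjoint sets of payoff variables, they are independent, so $\Pr[X_{\mathbf{a}}=1]=\prod_{i}1/m_i=1/M$ and hence $\mathbb{E}[N]=1$ exactly, for every $n$ and $\mathbf{m}$.

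The second step is the structural fact that controls the higher moments. If two distinct profiles $\mathbf{a},\mathbf{b}$ differ in exactly one coordinate $i$, then $\mathbf{a}_{-i}=\mathbf{b}_{-i}$, so at most one of $a_i,b_i$ can equal the unique best response $B_i(\mathbf{a}_{-i})$; hence $X_{\mathbf{a}}X_{\mathbf{b}}=0$ deterministically. If instead $\mathbf{a},\mathbf{b}$ differ in at least two coordinates, then $\mathbf{a}_{-i}\ne\mathbf{b}_{-i}$ for \emph{every} $i$, so $X_{\mathbf{a}}$ and $X_{\mathbf{b}}$ are functions of disjoint collections of payoff variables. The same dichotomy extends to any family of profiles that pairwise differ in at least two coordinates: for such a family the environments $\mathbf{a}^j_{-i}$ are pairwise distinct for every $i$, the indicators $X_{\mathbf{a}^j}$ depend on pairwise disjoint variable sets and are therefore jointly independent, and the probability that all $k$ members are PNE equals $M^{-k}$.

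The third step computes the $k$-th factorial moment $\mathbb{E}\!\left[N(N-1)\cdots(N-k+1)\right]=\sum\Pr[X_{\mathbf{a}^1}=\cdots=X_{\mathbf{a}^k}=1]$, the sum running over ordered $k$-tuples of distinct profiles. By the second step, a tuple contributes $0$ if some pair in it differs in exactly one coordinate, and contributes $M^{-k}$ otherwise; so the factorial moment equals $G_k\,M^{-k}$, where $G_k$ counts the ``good'' ordered $k$-tuples. Clearly $G_k\le M^{(k)}:=M(M-1)\cdots(M-k+1)$, and a union bound over which pair of positions is bad gives $G_k\ge M^{(k)}-\binom{k}{2}P_1 M^{k-2}$, where $P_1=\sum_{i}M(m_i-1)<nM\max_i m_i$ is the number of ordered pairs of profiles differing in exactly one coordinate. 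Since $q_{n,\mathbf{m}}\to\infty$ forces $M=q_{n,\mathbf{m}}\max_i m_i\to\infty$, and since $q_{n,\mathbf{m}}\ge 2^{n-1}$ gives $n=o(q_{n,\mathbf{m}})$ and hence $P_1/M^2<n/q_{n,\mathbf{m}}\to 0$, both $M^{(k)}/M^k\to 1$ and $\binom{k}{2}P_1 M^{k-2}/M^k\to 0$, so $\mathbb{E}\!\left[N(N-1)\cdots(N-k+1)\right]\to 1$ for every fixed $k\ge 1$.

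These limits are exactly the factorial moments of a $\mathrm{Poisson}(1)$ random variable, which is determined by its moments; by the method of moments $N$ converges in distribution to $\mathrm{Poisson}(1)$, and in particular $\Pr[N\ge 1]\to 1-e^{-1}$. (A shorter finish uses a Chen--Stein/dependency-graph bound: the $X_{\mathbf{a}}$ have a dependency graph joining profiles that differ in exactly one coordinate, of maximum degree $\sum_i(m_i-1)$, and the cross term in the Chen--Stein inequality vanishes because adjacent indicators are never simultaneously $1$, giving $d_{\mathrm{TV}}\!\left(N,\mathrm{Poisson}(1)\right)\le\bigl(1+\sum_i(m_i-1)\bigr)/M\to 0$.) I expect the main obstacle to be the combinatorial bookkeeping of the second and third steps: correctly characterising which tuples of profiles can be simultaneous equilibria, observing that one-coordinate-apart configurations contribute nothing, and bounding the number of remaining ``bad'' tuples by $o(M^k)$ — and it is precisely in this last estimate that the hypothesis $q_{n,\mathbf{m}}\to\infty$ enters. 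Everything else is a one-line first-moment computation together with an off-the-shelf moment (or Stein) argument.
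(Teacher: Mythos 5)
Your proof is correct. Note, however, that the paper itself offers no proof of this proposition: it is imported verbatim from \citet{rinott2000number}, who establish the stronger statement that $\#\text{PNE}(G_{n,\mathbf{m}})$ converges in distribution to Poisson$(1)$ by applying the Chen--Stein Poisson-approximation bounds of \citet{arratia1989two} to the dependency structure of the indicators $X_{\mathbf{a}}$ (exactly the route you sketch in your parenthetical remark, where the neighbourhood term is of order $\sum_i (m_i-1)/\mu$ and the cross term vanishes because profiles at Hamming distance one cannot both be equilibria). Your main argument via factorial moments is a genuinely different and self-contained alternative: the key observations---$\mathbb{E}[N]=1$ exactly, indicators at Hamming distance one are deterministically incompatible, indicators at Hamming distance at least two depend on disjoint payoff variables and are jointly independent, and the bad tuples are $o(M^k)$ because $P_1/M^2 < n/q_{n,\mathbf{m}}\to 0$ (using $n\le 1+\log_2 q_{n,\mathbf{m}}$)---are all verified correctly, and since Poisson$(1)$ is moment-determinate and $N$ is integer-valued, the conclusion $\Pr[N\ge 1]\to 1-e^{-1}$ follows. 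The trade-off is the usual one: the moment method is elementary but purely qualitative, whereas the Chen--Stein route used in the cited source yields an explicit total-variation rate of order $n\max_i m_i/\mu \approx 1/q_{n,\mathbf{m}}$ (up to the factor $n$), which is why the literature, and this paper's footnote, state the result in the stronger Poisson-limit form.
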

\noindent Since $q_{n,\mathbf{m}} \rightarrow \infty$ if and only if $n \rightarrow \infty$ or $m_i \rightarrow \infty$ for at least two players $i$, the probability that there is a PNE in a randomly drawn game approaches $1 - \exp\{-1\}$ when the number of players gets large or when the number of actions per player gets large for at least two players.\footnote{Using results from \cite{arratia1989two}, \cite{rinott2000number} prove the stronger result that the distribution of the number of PNE in random games is asymptotically $\text{Poisson}(1)$ as $q_{n,\mathbf{m}} \rightarrow \infty$. The probability that a PNE exists in a random game was previously studied by \cite{goldberg1968probability} in the 2-player case and by \cite{dresher1970probability} in the $n$-player case as the number of actions gets large for at least two players. \cite{powers1990limiting} and \cite{stanford1995note} noted that the distribution of $\# \text{PNE}(G_{n,\mathbf{m}})$ approaches a Poisson(1) as the number of actions gets large.}

\subsection{Games with $n>2$ players}\label{sec:n>2}
The following result shows that, in large $2$-action games, the random sequence best-response dynamic converges with high probability to a PNE if there is one. Let $\mathbf{2}$ denote a $n$-vector of $2$s.
\begin{proposition}[\citealp*{amiet2021pure}]\label{prop:R1}
\begin{equation*}
\lim_{n \rightarrow \infty} \Pr\left[ s_\texttt{r}\text{\emph{-best-response dynamic on $G_{n,\mathbf{2}}$ converges to a PNE}}  \,|\, \# \text{\emph{PNE}}(G_{n,\mathbf{2}})  \geq 1\right] = 1 .
\end{equation*}
\end{proposition}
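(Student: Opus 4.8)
The plan is to reduce the statement to a hitting problem for a random walk on the best-response digraph and to exploit a resampling structure in the number of players who are not currently best-responding. Write $\mathcal{M}=\{1,2\}^n$ and say that player $i$ is \emph{content} at a profile $\mathbf{a}$ if $a_i = b_i(\mathbf{a}_{-i})$. By Algorithm \ref{alg2}, the $s_\texttt{r}$-sequence best-response dynamic on $G_{n,\mathbf{2}}$ is exactly the random walk on $\mathcal{D}(G_{n,\mathbf{2}})$ that, at a profile $\mathbf{a}$, picks a player $i\in[n]$ uniformly at random, stays at $\mathbf{a}$ if $i$ is content, and otherwise moves to the profile obtained from $\mathbf{a}$ by changing coordinate $i$ to $b_i(\mathbf{a}_{-i})$. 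A profile is a PNE if and only if every player is content there, i.e.\ if and only if it is a sink of $\mathcal{D}(G_{n,\mathbf{2}})$, so the quantity to control is the probability that this walk, started at a uniformly random vertex, ever reaches a sink. I would reveal the payoffs lazily (principle of deferred decisions): reveal $b_i(\mathbf{a}_{-i})$ as an independent fair coin the first time the walk queries the environment $\mathbf{a}_{-i}$ of player $i$, which is distributionally the same as drawing $G_{n,\mathbf{2}}$ in advance.

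The key quantity is $U_t$, the number of players who are not content at the time-$t$ profile $\mathbf{A}^t$; the walk converges to a PNE precisely when $U_t$ first equals $0$. Call a step \emph{active} if the player selected at that step is not content, so that the profile actually changes. When player $i$ makes an active move at $\mathbf{a}$: (a) $i$'s own environment is unchanged, so $i$ becomes content; and (b) for every $j\neq i$ the environment $\mathbf{a}_{-j}$ changes in coordinate $i$, so whether $j$ is content afterwards is governed by $b_j$ evaluated at a \emph{new} environment. Hence, on the event that no environment is ever queried twice before absorption, immediately after an active step the value of $U$ is $\text{Bin}(n-1,1/2)$, independent of the past, and the successive values of $U$ at the active times form an i.i.d.\ $\text{Bin}(n-1,1/2)$ sequence. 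In that regime convergence occurs at the first active step at which this sequence takes the value $0$, which is geometric with success probability $2^{-(n-1)}$ per active step.

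This already shows the argument cannot be purely local: a geometric --- so roughly $2^{n-1}$ --- number of active steps is needed, while the hypercube has only $2^n$ vertices, so with high probability the walk leaves the ``no repeated environment'' window well before it would reach a sink. The remaining and main task is therefore to show that, \emph{conditionally on $\#\text{PNE}(G_{n,\mathbf{2}})\geq 1$} (an event of probability $\to 1-e^{-1}$ by Proposition \ref{prop:NE2}), the walk is nevertheless a.s.\ absorbed into a sink rather than into a non-singleton terminal strongly connected component (SCC) of $\mathcal{D}(G_{n,\mathbf{2}})$ --- since a finite-state chain a.s.\ enters some terminal SCC, and the only such class that is not a PNE is one of size at least two containing no sink. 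I would attack this in two parts. First, a subset $T\subseteq\mathcal{M}$ is closed under the dynamics if and only if every player is content, at every $\mathbf{a}\in T$, along every coordinate pointing out of $T$, an event of probability $2^{-e_Q(T)}$ where $e_Q(T)$ is the hypercube edge-boundary of $T$; a first-moment bound over short best-response cycles --- the minimal non-sink terminal SCCs, which have length at least $4$ --- shows their expected number is subexponentially small and that the walk meets none of them with high probability. Second, and more delicately, one must rule out the walk being funnelled into a \emph{large} non-sink trap; here I would use the conditioning on a sink existing to reweight toward digraphs possessing an accessible sink, together with a careful coupling that keeps track of exactly which best responses have been revealed so far, to argue that a uniformly random start lies, with high probability, outside the basin of every non-sink terminal SCC (a basin is closed under taking in-neighbours, so a walk started outside all such basins can never enter one and must converge to a PNE).

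The hard part will be this last step. By the time the walk has run for the (necessarily exponentially many) steps required to have a real chance of hitting a sink, the deferred-decisions independence has been largely exhausted, so one cannot treat the partially revealed digraph as a fresh uniformly random one; reconciling the exponential hitting time of $\{U=0\}$ with the $2^n$-vertex budget, and controlling the terminal-SCC structure uniformly over this long horizon, is the crux. Note finally that the conditioning on existence of a PNE is essential rather than cosmetic: with probability bounded away from $0$ there is no sink at all, in which case convergence is impossible, so the unconditional probability of convergence cannot tend to $1$.
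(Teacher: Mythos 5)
The paper itself contains no proof of Proposition \ref{prop:R1}: it is imported verbatim from \citet{amiet2021pure}, and the appendix explicitly restricts attention to the clockwork dynamic. So the only question is whether your argument stands on its own, and it does not. Your reduction (convergence fails iff the walk is absorbed into a sinkless closed set), the deferred-decisions coupling, and the observation that after an active step by player $i$ the number of discontent players is $\mathrm{Bin}(n-1,1/2)$ on fresh environments are all sound. But, as you yourself note, this freshness analysis only shows that convergence cannot be achieved within the ``no repeated environment'' window; it carries no positive content. The entire burden then falls on the final step --- showing that, conditionally on $\#\mathrm{PNE}(G_{n,\mathbf{2}})\geq 1$, a uniformly random start lies with high probability outside the basin of every sinkless terminal class, or equivalently that the path-dependent walk is funnelled to a sink once the revealed digraph structure dominates --- and you leave that step entirely open (``the hard part will be this last step''). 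That step is essentially the proposition itself, so what you have is a plan with the crux missing, not a proof.

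The one concrete piece you do offer for the absorption question is also not right as stated. The expected number of short best-response cycles is not small: for $4$-cycles it is of order $\binom{n}{2}2^{n-2}2^{-4}\to\infty$, and the walk certainly may meet such cycles --- it escapes them because other players can move, not because they are absent. What has small expectation is the number of small \emph{closed} sets (traps), via $\Pr[T \text{ closed}]=2^{-e_Q(T)}$ combined with a count of connected subsets; but that bound degrades and fails for large $T$ (by Harper's inequality $e_Q(T)$ grows only linearly in $|T|$ near half the cube, while the number of candidate sets grows much faster), which is exactly the regime your argument must handle and does not. So the proposal conflates cycles with terminal classes in the ``easy'' half and omits the ``hard'' half altogether; it cannot be credited as a proof of the proposition.
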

\medskip
 Combined with Proposition \ref{prop:NE2}, it follows that over the class of all $2$-action games, the random sequence best-response dynamic converges to a PNE with probability about $(1 - \exp\{-1\})$, i.e.\ in approximately $63\%$ of those games, when the number of players is large.

A generalization of Proposition \ref{prop:R1} to games with more than 2 actions per player is non-trivial. There are currently no existing analytical results for such cases, so this area remains open for future research. However, we conjecture that for $n>2$, the random sequence best-response dynamic converges to a PNE with high probability if there is one as $q_{n,\mathbf{m}} \rightarrow \infty$. Consistent with this conjecture, in the simulations of Section \ref{sec:simulations} we show that, provided $n >2$, the random sequence best-response dynamic does converge to a PNE with probability close to $1-\exp\{-1\}$ when $n$ gets large or when the number of actions gets large for at least two players.

Our main result for the clockwork sequence best-response dynamic in games with $n>2$ players is given below.
\begin{theorem}\label{thm}
\begin{equation*}
\frac{1}{4 \sqrt{n}} \frac{1}{\sqrt{ q_{n,\mathbf{m}} }} \leq 
\Pr\left[
\begin{array}{c}
\text{\emph{$s_\texttt{c}$-best-response dynamic}}\\
\text{\emph{on $G_{n,\mathbf{m}}$ converges to a PNE}}
 \end{array}
 \right] 
\leq  \frac{6 n \sqrt{\log(q_{n,\mathbf{m}})} }{ \sqrt{ q_{n,\mathbf{m}} } } .
\end{equation*}
Consequently, since the upper and lower bound both go to zero as $n$ gets large or when the number of actions gets large for at least two players,
\begin{equation*}
\lim_{q_{n,\mathbf{m}} \rightarrow \infty}\Pr\left[ s_\texttt{c}\text{\emph{-best-response dynamic on $G_{n,\mathbf{m}}$ converges to a PNE}}  \right]  =0 .
\end{equation*}
\end{theorem}
\medskip
So, with high probability, the clockwork sequence best-response dynamic does not converge to a PNE as the number of players gets large or as the number of actions for at least two players gets large. This is in sharp contrast with the asymptotic behavior of the random sequence best-response dynamic. It is intuitive that the clockwork sequence converges to a PNE less often than the random sequence because it will have more difficulty escaping cycles in a best-response digraph. That said, the extreme swing in the asymptotic probability of convergence from 1 to 0 is rather striking.

%Our main result for the clockwork sequence best-response dynamic in games with $n>2$ players is given below.
%\begin{theorem}\label{prop:F2}
%\begin{equation*}
%\lim_{q_{n,\mathbf{m}} \rightarrow \infty}\Pr\left[ s_\texttt{c}\text{\emph{-best-response dynamic on $G_{n,\mathbf{m}}$ converges to a PNE}}  \right]  =0 .
%\end{equation*}
%\end{theorem}
%\medskip
%So, with high probability, the clockwork sequence best-response dynamic does not converge to a PNE as the number of players gets large or as the number of actions for at least two players gets large. This is in sharp contrast with the asymptotic behavior of the random sequence best-response dynamic. It is intuitive that the clockwork sequence converges to a PNE less often than the random sequence because it will have more difficulty escaping cycles in a best-response digraph. That said, the extreme swing in the asymptotic probability of convergence from 1 to 0 is rather striking.
%
%Theorem \ref{prop:F2} is an immediate consequence of the result below, which gives us bounds on the probability of convergence to equilibrium:
%\begin{theorem}\label{thm}
%\begin{equation*}
%\frac{1}{4 \sqrt{n}} \frac{1}{\sqrt{ q_{n,\mathbf{m}} }} \leq 
%\Pr\left[
%\begin{array}{c}
%\text{\emph{$s_\texttt{c}$-best-response dynamic}}\\
%\text{\emph{on $G_{n,\mathbf{m}}$ converges to a PNE}}
% \end{array}
% \right] 
%\leq  \frac{6 n \sqrt{\log(q_{n,\mathbf{m}})} }{ \sqrt{ q_{n,\mathbf{m}} } } .
%\end{equation*}
%\end{theorem}
%\medskip
%The bounds above both go to zero as $n$ gets large or when the number of actions gets large for at least two players. 

We briefly comment on Theorem \ref{thm} and its implications. (i) In Algorithm \ref{alg2}, drawing payoffs independently at random (from an atomless distribution) induces a uniform distribution over best-response digraphs.\footnote{This follows from the manner in which the payoffs are drawn: there is a zero probability of ties because $\mathbb{P}$ is atomless and for each $i \in [n]$ the probability that action $a_i \in [m_i]$ is a best-response to environment $\mathbf{a}_{-i}$ is given by
$$\Pr\left[U_i(a_i , \mathbf{a}_{-i}) = \max_{x_i \in [m_i]} U_i(x_i,\mathbf{a}_{-i})\right] = \frac{1}{m_i} . $$} It is in this sense that we can say that the best-response dynamic converges in a ``large'' or ``small'' fraction of all games. (ii) Our proof of Theorem \ref{thm} relies on a coupling argument (explained in the appendix) that makes it possible to deal with the path-dependence of the best-response dynamic (which arises from the fact that if a player encounters an environment that they had seen before, they must play the same action that they played when the environment was first encountered). The proof centers on bounding the time it takes for some player to re-encounter a previously seen environment along a best-response path and this time is fundamentally determined by $q_{n,\mathbf{m}}$, which is the minimal number of possible environments. (iii) In fact, Theorem \ref{thm} gives us the following corollary, which shows that the asymptotic probability of convergence to equilibrium is determined primarily by the value of the parameter $q_{n,\mathbf{m}}$.\footnote{Since $\log(q_{n,\mathbf{m}})$ is dominated by a polynomial in $n$ and $\mathbf{m}$, and $q_{n,\mathbf{m}}$ grows faster than $\log(q_{n,\mathbf{m}})$ and than $n$ to any power, the asymptotic behavior of each bound is governed by the behavior of the term $\sqrt{ q_{n,\mathbf{m}} }$ in the denominator.}
\begin{corollary}
\label{corollary:scaling}
The asymptotic probability that the clockwork sequence best-response dynamic converges to a PNE is, up to a polynomial factor, of order $1/\sqrt{ q_{n,\mathbf{m}} }$.
\end{corollary}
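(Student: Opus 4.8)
The plan is to reduce both inequalities to a single combinatorial picture controlled by one coupling. First relabel the players so that player $1$ has the most actions, $m_1=\max_i m_i$, and write $q:=q_{n,\mathbf m}$, so $\prod_{j\ne 1}m_j=q$. Group the time steps into \emph{cycles}, the $k$-th cycle being steps $(k-1)n+1,\dots,kn$; each cycle has player $1$ move first and then players $2,\dots,n$ in order, and I write $\mathbf A^{kn}$ for the profile at the end of cycle $k$ ($\mathbf A^0$ being the uniform starting profile). Call cycle $k$ a \emph{freeze cycle} if every player $j\ne 1$ replays her current action during it. A short check shows that a freeze cycle certifies a pure Nash equilibrium — player $1$ is best-responding because she has just moved, and each $j\ne 1$ is best-responding because her own coordinate, hence her environment, is unchanged across the cycle — while conversely every cycle after the dynamic has settled is a freeze cycle. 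Hence \emph{the clockwork dynamic converges if and only if some cycle is a freeze cycle}, and whether cycle $k$ is a freeze cycle is decided by the best responses of players $j\ne 1$ at the $n-1$ environments they meet during that cycle.

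The engine of both bounds is a coupling that exploits the i.i.d.\ atomless payoffs: the first time any player's best response is queried at a given environment, the answer is uniform on her action set and independent of everything revealed so far. Let $R$ be the first cycle at which some player is queried at an environment already seen in an earlier cycle. Then, up to (and excluding) cycle $R$, the profiles $\mathbf A^0,\mathbf A^n,\mathbf A^{2n},\dots$ are i.i.d.\ uniform on $\mathcal M$, and for each $k<R$ the conditional probability that cycle $k$ is a freeze cycle is exactly $\prod_{j\ne 1}1/m_j=1/q$. The bottleneck for $R$ is player $1$, whose environment space $\mathcal M_{-1}$ has size exactly $q$ (every other player's has size at least $q$): a birthday estimate for player $1$ together with a union bound over the $n$ players gives, for $\ell:=\lfloor\sqrt{q/n}\,\rfloor$, both $\Pr[R>\ell]\ge 1/2$ and $\Pr[R>2\sqrt{q\log q}\,]\le 1/q$.

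For the lower bound I would condition on $\{R>\ell\}$. On this event the first $\ell$ cycles are collision-free, so the indicators that cycle $k$ is a freeze cycle are, conditionally, independent Bernoulli$(1/q)$ variables for $k=1,\dots,\ell$, and therefore $\Pr[\text{converge}]\ge\tfrac12\bigl(1-(1-1/q)^{\ell}\bigr)$; since $\ell/q$ is small this is at least $\ell/(4q)\ge\tfrac1{4\sqrt n}\tfrac1{\sqrt q}$ after routine arithmetic (a slightly sharper choice of $\ell\asymp\sqrt{q/n}$ absorbs the constants). Note this argument also produces the PNE, so it needs no separate appeal to Proposition \ref{prop:NE2}.

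For the upper bound, the environment seen by player $1$ at the start of successive cycles evolves by a deterministic self-map $\Psi$ of $\mathcal M_{-1}$ (a whole cycle is determined by player $1$'s entering environment), so the dynamic converges precisely when the eventually-periodic $\Psi$-orbit closes a loop of length $1$ rather than a longer loop. In the collision-free regime this orbit is distributed like that of a uniform random self-map of a $q$-element set from a uniform start, so at the moment the orbit first returns the returning point is uniform among the $R-1$ points already visited, whence it closes a $1$-loop with conditional probability $1/(R-1)$; since $R$ is of order $\sqrt q$ — and never smaller than $\ell\asymp\sqrt{q/n}$, the extreme case being many players tied for the maximum — this accounts for an $O(\sqrt n/\sqrt q)$ contribution, and adding the collision tail $\Pr[R>2\sqrt{q\log q}\,]\le 1/q$ and the coupling error yields $6n\sqrt{\log q}/\sqrt q$ with room to spare. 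The step I expect to be the real obstacle is precisely this coupling: $\Psi$ is not a genuine uniform random function, and once a player must replay an old best response — or, in the presence of ties at the maximum, once a player other than player $1$ is the first to collide — the clean random-function description fails, so one must show that these path-dependent effects cost only a polynomial factor in $n$ and a $\sqrt{\log q}$, rather than changing the order of the estimate. Everything else is birthday-paradox bookkeeping together with the elementary $\rho$-geometry of a random self-map of a $q$-element set.
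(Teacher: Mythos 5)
First, note that in the paper this corollary is an immediate consequence of Theorem \ref{thm}: both bounds there differ from $1/\sqrt{q_{n,\mathbf m}}$ only by factors ($n$, $\sqrt{\log q_{n,\mathbf m}}$) that are polynomially bounded. Your proposal is therefore in effect an attempted re-derivation of Theorem \ref{thm}, and its skeleton — a deferred-decision coupling with a memoryless clockwork walk, a birthday estimate on the first repeated environment of the player with the largest action set (whose environment space has size exactly $q$), and a per-round success probability $1/q$ — is the same as the paper's. Your freeze-cycle characterization of convergence is also correct. However, two steps do not hold as written.

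For the lower bound, the claim that conditional on $\{R>\ell\}$ the freeze indicators of cycles $1,\dots,\ell$ are i.i.d.\ Bernoulli$(1/q)$ is false, because a freeze cycle at cycle $k$ forces player $1$ to be queried at cycle $k+1$ at the very environment she saw at cycle $k$; hence $\{R>\ell\}$ \emph{excludes} any freeze among cycles $1,\dots,\ell-1$, and conditional on $\{R>\ell\}$ the probability of a freeze within the first $\ell$ cycles is of order $1/q$, not $\ell/q$. So $\Pr[\text{converge}]\ge\tfrac12\bigl(1-(1-1/q)^{\ell}\bigr)$ does not follow from your argument. The event you actually need is ``first freeze at cycle $k$ and no environment repetition before $k$,'' summed over $k\le\ell$, and bounding its probability below requires precisely the conditional-symmetry / planted-sink device of the paper (Remark \ref{rem:4} and the exchangeability step showing $\Pr[\mathbf X_{-i}^{h_i(k)-1}=\mathbf x_{-i}\mid F_{\langle \p{X},s_\texttt{c}\rangle}>t]=1/q$), which you do not supply. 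For the upper bound, you yourself flag the crucial step — that the cycle map $\Psi$ can be treated as a uniform random self-map of a $q$-element set despite replays and despite collisions by players other than player $1$ occurring before player $1$'s environment repeats — as an unresolved obstacle; but that is exactly the part that needs proof, so the upper bound is a heuristic rather than an argument. The paper avoids the random-mapping $\rho$-structure entirely: it bounds $\Pr[\text{hit by }t]\le t/q_{n,\mathbf m}$ by a union bound over planted sinks (Lemma \ref{prop:by_t}) and $\Pr[\text{hit after }t]$ by the birthday tail $\exp\{-(\lfloor t/n-1\rfloor)^2/(2q_{n,\mathbf m})\}$ (Lemma \ref{prop:after_t}), then optimizes at $t\asymp n\sqrt{q_{n,\mathbf m}\log q_{n,\mathbf m}}$. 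As it stands, then, your proposal contains a false step in the lower bound and an unproven core step in the upper bound.
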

%The quantity $q_{n,\mathbf{m}}$ is the minimal number of possible environments for a player in a game $G_{n,\mathbf{m}}$. Its appearance in our results can be explained as follows: if the path generated by Algorithm \ref{alg2} initiated at a random vertex hits a sink of the randomly drawn digraph, the same PNE action profile is repeated forever after, with each player re-encountering the same environment every $n$ steps. Our proof of Theorem \ref{thm} centers on bounding the time it takes for some player to re-encounter a previously seen environment, and this time is fundamentally determined by the minimal number of possible environments.

\subsection{Games with $n=2$ players}\label{sec:n=2} For $n=2$ players, we provide detailed results on both game duration and on the probability of convergence to equilibrium.  

If the path $\langle \p{a}, s_\texttt{c} \rangle$ generated by the clockwork best-response dynamic on a 2-player game $g_{2,\mathbf{m}}$ has the property that from $t$ onwards, the sequence of $2k$ possibly non-distinct action profiles $\mathbf{a}^t,...,\mathbf{a}^{t+2k-1}$ repeats itself forever and $t$ is the hitting time to $\mathbf{a}^t$, then we say that the clockwork best-response dynamic converged to a cycle of length 2$k$, or a 2$k$-cycle, at time $t$, where $k \in \{1,...,m_*\}$ and $m_*:= \min\{m_1,m_2\}$.

\begin{theorem}\label{thm2}
For any $k \in \{1,...,m_*\}$ and $t \in \{1,...,2(m_*-k+1)\}$,\footnote{For any $k\in \{1,...,m_*\}$ the product is non-negative provided $t + 2k- 2 \leq 2m_*$.}
\begin{align}\label{eq:thm21}
\Pr\left[
\begin{array}{c}
\text{\emph{$s_\texttt{c}$-best-response dynamic on $G_{2,\mathbf{m}}$}}\\
\text{\emph{converges to a $2k$-cycle at time $t$}}
 \end{array}
 \right]=
\frac{1}{m_{s_\texttt{c}(t+2k-1)}}  \prod_{i=1}^{t+2k-2} \left(1 - \frac{1}{m_{s_\texttt{c}(i)}} \floor*{\frac{i}{2}} \right) .  
\end{align}
\end{theorem}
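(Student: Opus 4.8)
The plan is to track the best-response dynamic on a $2$-player random game as a process that reveals payoff information only as needed, and to compute the probability of the event ``converges to a $2k$-cycle at time $t$'' by decomposing it into a sequence of conditionally independent coordinate events. In a $2$-player game the clockwork sequence is simply $s_\texttt{c}(t) = 1,2,1,2,\dots$, so at odd times player $1$ moves and at even times player $2$ moves. The key structural observation is that player $1$'s move at time $t$ is a draw of $b_1(\mathbf{a}^{t-1}_{-1})$, i.e.\ of the best response to player $2$'s current action, and this environment is one of $m_2$ possible environments; similarly player $2$ faces one of $m_1$ environments. Along the path, the environment faced by the mover at step $i$ has been ``seen before'' on exactly $\lfloor i/2 \rfloor$ earlier steps of that same player (this is where the floor term comes from: after $i$ steps total, the mover at step $i$ has moved $\lceil i/2 \rceil$ times including the current one, hence seen $\lfloor i/2\rfloor$ prior environments, all of which are forced to have been distinct if we are still on a non-repeating initial segment). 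First I would set up this bookkeeping carefully and argue that, conditional on the path not yet having closed a cycle, each environment encountered by step $i$ is fresh and its best response is uniform over the relevant action set, independent of all previously revealed information.

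Next I would compute the two elementary conditional probabilities. Given that we are at step $i$ in a not-yet-closed path and the mover faces a fresh environment, the probability that the newly revealed best response does \emph{not} coincide with the mover's current action — i.e.\ that the path genuinely moves to a new profile rather than closing up — is $1 - \tfrac{1}{m_{s_\texttt{c}(i)}}$ if this is the first time; but more care is needed because ``closing a cycle at time $t$ into a $2k$-cycle'' is a specific event, not just ``the path moves.'' The cleaner way is: for the path to first repeat a profile exactly so as to form a $2k$-cycle starting at time $t$, we need the first $t-1$ moves to produce $t-1$ genuinely new profiles (so the mover at each step $i \le t+2k-2$ must, when facing a fresh environment, pick a best response consistent with continuing), and then at step $t+2k-1$ the mover must pick the best response that lands exactly on $\mathbf{a}^{t}$, closing the $2k$-cycle. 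The probability of the final closing move is $\tfrac{1}{m_{s_\texttt{c}(t+2k-1)}}$ since the relevant environment is fresh and its best response is uniform. Each earlier step $i$ contributes a factor equal to the probability that the mover, facing a fresh environment with $\lfloor i/2\rfloor$ previously-used responses already excluded (those led elsewhere), picks one of the remaining admissible values — giving the factor $1 - \tfrac{1}{m_{s_\texttt{c}(i)}}\lfloor \tfrac{i}{2}\rfloor$. Multiplying the conditionally independent factors yields the stated product.

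The main obstacle I expect is making the conditioning rigorous: one must show that the environments revealed along the first $t+2k-1$ steps are, on the event in question, genuinely distinct for each player, so that the best-response draws are mutually independent and uniform, and one must correctly count how many of a player's actions are ``already forbidden'' at step $i$ because earlier identical-environment encounters sent the path elsewhere. The subtlety is that in a $2$-cycle ($k$ small) versus a long cycle, the environment re-encounter pattern differs, and one needs a uniform argument covering all $k$ and $t$ in the stated ranges; the footnote's constraint $t + 2k - 2 \le 2m_*$ is exactly what guarantees that $\lfloor i/2 \rfloor < m_{s_\texttt{c}(i)}$ for all relevant $i$, so that no factor is negative and the forbidden-action count never exceeds the action set — I would verify this bound carefully and use it to justify that the process can indeed survive (without forced early closure) up to the claimed time. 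A secondary point to handle is that the initial profile $\mathbf{A}^0$ is drawn uniformly but, as noted in the paper, this does not affect the distribution; I would note that the first move's environment is fresh regardless, so step $1$ contributes the factor $1 - \tfrac{1}{m_{s_\texttt{c}(1)}}\lfloor 1/2\rfloor = 1$, consistent with the formula.
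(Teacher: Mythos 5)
Your proposal is correct and takes essentially the same route as the paper's proof: decompose the event into ``no early cycle closure through step $t+2k-2$'', with each step $i$ contributing the conditional factor $1-\lfloor i/2\rfloor/m_{s_\texttt{c}(i)}$ because the mover's freshly revealed best response is uniform and must avoid the $\lfloor i/2\rfloor$ environments already encountered by the other player, times the probability $1/m_{s_\texttt{c}(t+2k-1)}$ of the unique closing action, justified by deferred revelation of best responses (the paper's coupling to a clockwork random walk). The only cosmetic slips are in your bookkeeping prose: for even $i$ the forbidden set is the mover's past responses together with her coordinate of the initial profile (still $\lfloor i/2\rfloor$ values), and the closing move at step $t+2k-1$ does not itself land on $\mathbf{a}^t$ but rather matches the environment faced at time $t$, which forces $\mathbf{a}^{t+2k}=\mathbf{a}^t$.
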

Thus we have an exact expression for the probability that the clockwork sequence best-response dynamic converges to a $2k$-cycle at time $t$.\footnote{See also \cite{pangallo2019best} for an exact formula giving the probability of existence of cycles of any length in 2-player games.} Setting $k=1$ in \eqref{eq:thm21} yields the exact probability that the clockwork sequence best-response dynamic on $G_{2,\mathbf{m}}$ converges to a PNE at time $t$.

As a straightforward corollary of Theorem \ref{thm2}, the probability that the clockwork sequence best-response dynamic converges to a $2k$-cycle is obtained by summing \eqref{eq:thm21} over all $t \in \{1,...,2(m_*-k+1)\}$:
\begin{corollary}\label{cor2}
\begin{equation}\label{eq:n=2}
\Pr\left[
\begin{array}{c}
\text{\emph{$s_\texttt{c}$-best-response}}\\
\text{\emph{dynamic on $G_{2,\mathbf{m}}$}}\\
\text{\emph{converges to a $2k$-cycle}}
 \end{array}
 \right]   = \sum_{t=1}^{2(m_*-k+1)} \; \frac{1}{m_{s_\texttt{c}(t+2k-1)}}  \prod_{i=1}^{t+2k-2} \left(1 - \frac{1}{m_{s_\texttt{c}(i)}} \floor*{\frac{i}{2}} \right)  .
\end{equation}
\end{corollary}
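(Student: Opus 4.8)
The plan is to realize the event in \eqref{eq:n=2} as a finite disjoint union indexed by the convergence time and then invoke additivity, drawing each summand from Theorem \ref{thm2}. Write $C_k$ for the event that the $s_\texttt{c}$-best-response dynamic on $G_{2,\mathbf{m}}$ converges to a $2k$-cycle, and $C_{k,t}$ for the event that it does so at time $t$. The first task is to check that for any realized path there is exactly one time at which convergence to a $2k$-cycle occurs, so that the events $\{C_{k,t}\}_t$ are pairwise disjoint and $C_k = \bigsqcup_t C_{k,t}$. This is where the condition ``$t$ is the hitting time to $\mathbf{a}^t$'' in the definition preceding Theorem \ref{thm2} does the work: the convergence time is read as the first time from which the path is $2k$-periodic, i.e.\ the genuine entry into the terminal cycle rather than any later rotation of it, and this is a single well-defined integer for each path. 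Once disjointness is in hand, the identity $\Pr[C_k] = \sum_t \Pr[C_{k,t}]$ is immediate by additivity over disjoint events.

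The substantive step is to show that the union is finite and ranges exactly over $t \in \{1,\dots,2(m_*-k+1)\}$, i.e.\ that convergence to a $2k$-cycle cannot occur after time $2(m_*-k+1)$. I would reduce the two-player clockwork dynamic to the iteration of a single map on a finite set: since players alternate, the evolution of player~2's action at the even time steps is governed by $f := b_2 \circ b_1 : [m_2] \to [m_2]$, and the whole path is determined by the orbit $c_0, c_1 = f(c_0), c_2 = f(c_1), \dots$ together with the interleaved player~1 responses. Because $f$ acts on a finite set, its orbit is eventually periodic, with a tail of distinct values followed by a cycle of distinct values whose period is precisely $k$. Distinctness forces the tail length $\ell$ and the period to satisfy $\ell + k \le m_2$, and by the symmetric argument tracking player~1 via $b_1 \circ b_2$ also $\ell + k \le m_1$, hence $\ell + k \le m_*$. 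Translating back to profile time (each $f$-step is two time steps) bounds the convergence time by $2(m_*-k)+2 = 2(m_*-k+1)$, which is exactly the range in \eqref{eq:n=2} and matches the non-negativity window $t + 2k - 2 \le 2m_*$ recorded in the footnote to Theorem \ref{thm2}. For $t$ outside this range, $C_{k,t} = \emptyset$.

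With disjointness and the finite range established, the corollary follows by substituting the closed form of $\Pr[C_{k,t}]$ from \eqref{eq:thm21} into $\Pr[C_k] = \sum_{t=1}^{2(m_*-k+1)} \Pr[C_{k,t}]$. The main obstacle is the second step---pinning down the exact upper limit $2(m_*-k+1)$ on the convergence time; the distinctness/tail-length argument for the induced map $f$ supplies this cleanly and simultaneously guarantees that the union is finite, so that additivity applies with no residual tail term. Verifying disjointness and assembling the sum are then routine once the convergence time is seen to be uniquely determined for each path.
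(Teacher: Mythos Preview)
Your argument is correct and in fact more detailed than the paper's: the paper states the corollary as an immediate consequence of Theorem~\ref{thm2}, simply ``obtained by summing \eqref{eq:thm21} over all $t \in \{1,\dots,2(m_*-k+1)\}$'', with no further justification. Your disjoint decomposition $C_k = \bigsqcup_t C_{k,t}$ and the appeal to additivity are exactly what the paper leaves implicit.

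Where you diverge is in justifying the upper limit of the sum. The paper relies on the formula itself: by the footnote to Theorem~\ref{thm2}, the product in \eqref{eq:thm21} contains a zero factor once $t+2k-2 > 2m_*$, so terms beyond $t = 2(m_*-k+1)$ vanish automatically and the range can simply be read off. You instead supply an independent combinatorial argument via the induced self-map $f = b_2 \circ b_1$ on $[m_2]$, which is more work but explains \emph{why} the cut-off sits where it does. One minor imprecision: the pre-periods of the orbit of $f$ on $[m_2]$ and of $g = b_1 \circ b_2$ on $[m_1]$ need not coincide (they can differ by one, since the two action sequences are offset by a single best-response), so writing a single $\ell$ with both $\ell + k \le m_1$ and $\ell + k \le m_2$ is slightly loose. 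Tracking the two pre-periods separately still yields $\tau \le 2(m_*-k+1)$, and your ``$+2$'' slack in the translation to profile time happens to absorb the discrepancy, so the conclusion stands.
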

Setting $k=1$ in \eqref{eq:n=2} yields the exact probability that the clockwork sequence best-response dynamic on $G_{2,\mathbf{m}}$ converges to a PNE.

To get a better sense of the behavior of \eqref{eq:n=2}, we now study its asymptotics, which are easiest to see when $m_1=m_2=m$. We maintain this restriction in the rest of this section. Let $\Phi(\cdot)$ denote the standard normal cumulative distribution function:
$$ \Phi(x):= \frac{1}{\sqrt{2\pi}} \int_{-\infty}^x \exp\left\{ - \frac{z^2}{2} \right\} dz. $$
We say that $f(n)$ is asymptotically $g(n)$ if $f(n)/g(n) \rightarrow 1$ as $n \rightarrow \infty$, and $f(n) = o( g(n))$ denotes $f(n)/g(n) \rightarrow 0$ as $n \rightarrow \infty$.
\begin{proposition}\label{prop:F1}
Set $m_1=m_2=m$. If $k = o(m^{2/3})$ then, as $m \rightarrow \infty$, \eqref{eq:n=2} is asymptotically
\begin{equation*}
2 \sqrt{\frac{\pi}{m}} \left( 1 - \Phi\left( \frac{2k -1}{\sqrt{2m}}  \right)\right) .
\end{equation*}
If $k = o(\sqrt{m})$ then, as $m \rightarrow \infty$, \eqref{eq:n=2} is asymptotically $\sqrt{\pi/m}$.
\end{proposition}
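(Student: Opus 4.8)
The plan is to start from the exact identity \eqref{eq:n=2} of Corollary~\ref{cor2}, specialise it to $m_1=m_2=m$ (so every factor $m_{s_\texttt{c}(\cdot)}$ equals $m$), reduce the sum to a clean closed form, and then run a Gaussian/Laplace-type asymptotic analysis. First I would simplify the product: since $\lfloor i/2\rfloor$ runs through $0,1,1,2,2,3,3,\dots$ as $i=1,2,3,\dots$, a short parity computation gives $\prod_{i=1}^{2L+1}(1-\lfloor i/2\rfloor/m)=Q(L)^2$ and $\prod_{i=1}^{2L}(1-\lfloor i/2\rfloor/m)=Q(L-1)Q(L)$, where $Q(L):=\prod_{j=1}^{L}(1-j/m)$ and $Q(0):=1$. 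Re-indexing the sum over $t$ by $L$, using $Q(m)=0$, and collapsing the even and odd contributions, one finds that the quantity in \eqref{eq:n=2} equals
\[
S \;=\; \frac1m\sum_{L=k-1}^{m-1} Q(L)^2\Bigl(2-\frac{L+1}{m}\Bigr).
\]
This is the routine bookkeeping step.

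The heart of the argument is the asymptotics of $\sum_L Q(L)^2$. Writing $\log Q(L)=\sum_{j=1}^{L}\log(1-j/m)=-\tfrac{L(L+1)}{2m}+O(L^3/m^2)$ (valid, say, for $L\le m/2$), one gets on the one hand the crude uniform bound $Q(L)^2\le e^{-L^2/m}$, and on the other the local estimate $Q(L)^2=e^{-(L+1/2)^2/m}\bigl(1+o(1)\bigr)$, with $o(1)$ uniform in the range $L=o(m^{2/3})$ — and it is exactly here that the hypothesis $k=o(m^{2/3})$ is used, since it guarantees that essentially all of the mass of $\sum_{L\ge k-1}Q(L)^2$ sits on $L\asymp\max(k,\sqrt m)$ with width $\asymp\min(\sqrt m,\,m/k)$, hence on $L=o(m^{2/3})$. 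Using the bound $Q(L)^2\le e^{-L^2/m}$ to discard the terms with $L$ beyond $\max(k,\sqrt m)$ times a slowly growing factor, and noting that $2-(L+1)/m=2+o(1)$ on the surviving range, one obtains
\[
S \;=\; (1+o(1))\,\frac2m\sum_{L=k-1}^{\infty} e^{-(L+1/2)^2/m}.
\]

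A Riemann-sum (Euler--Maclaurin) comparison then gives $\sum_{L\ge k-1}e^{-(L+1/2)^2/m}=(1+o(1))\int_{k-1/2}^{\infty}e^{-x^2/m}\,dx$, and the substitution $x=\sqrt{m/2}\,z$ evaluates the integral as $\sqrt{\pi m}\,\bigl(1-\Phi(\tfrac{2k-1}{\sqrt{2m}})\bigr)$. Plugging back yields $S=(1+o(1))\,2\sqrt{\pi/m}\,\bigl(1-\Phi(\tfrac{2k-1}{\sqrt{2m}})\bigr)$, the first claim. The second claim is then immediate: $k=o(\sqrt m)$ forces $\tfrac{2k-1}{\sqrt{2m}}\to 0$, so $\Phi(\tfrac{2k-1}{\sqrt{2m}})\to\Phi(0)=\tfrac12$ and $S=(1+o(1))\sqrt{\pi/m}$.

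I expect the main obstacle to be the uniform error control when $k$ is allowed to grow as fast as $o(m^{2/3})$: one must pin down the window of $L$-values carrying almost all the mass of $\sum Q(L)^2$, check it lies inside the regime $L=o(m^{2/3})$ where the Gaussian approximation for $Q(L)^2$ is legitimate, and bound the complement with the crude inequality $Q(L)^2\le e^{-L^2/m}$ — this has to be done so that the relative error tends to $0$ even when $1-\Phi(\tfrac{2k-1}{\sqrt{2m}})$ is itself small. A minor point to flag explicitly is that this method only fixes the argument of $\Phi$ up to an additive $O(1)$ in the numerator: replacing $2k-1$ by $2k$ or $2k-2$ shifts the argument by $O(1/\sqrt m)$, and since $k=o(m^{2/3})$ gives $\tfrac{2k-1}{\sqrt{2m}}=o(m^{1/6})$, the standard mild-deviation estimate $\bigl(1-\Phi(a+O(1/\sqrt m))\bigr)/\bigl(1-\Phi(a)\bigr)\to 1$ shows all such choices are asymptotically equivalent; I would present the statement with $2k-1$ because that is the value that drops out of the Riemann-sum comparison with the natural midpoint sample points.
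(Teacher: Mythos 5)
Your proposal is correct and follows essentially the same route as the paper's proof: a Gaussian approximation of the partial products holding uniformly on the range $o(m^{2/3})$, a crude exponential bound to dispose of the tail (with relative-error control even when the limit is small), and a sum-to-integral comparison that produces the $\Phi$ expression, with the $k=o(\sqrt m)$ case read off from $\Phi(0)=\tfrac12$. The differences are cosmetic: your parity re-indexing of \eqref{eq:n=2} into $Q(L)^2\bigl(2-\tfrac{L+1}{m}\bigr)$ terms and the $\log(1-x)$ expansion play the role of the paper's Stirling-based uniform approximation \eqref{eq:approx1} from the proof of Theorem \ref{thm3}, and your bulk/tail cutoff corresponds to the paper's split of the sum at a threshold $T$ with $k=o(T)$, $T=o(m^{2/3})$.
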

\medskip

The asymptotics given in Proposition \ref{prop:F1} help us to better understand the behavior of the clockwork sequence best-response dynamic in large 2-player games. (i) The probability of convergence to a PNE, which corresponds to setting $k =1$, goes to zero when $m \rightarrow \infty$.\footnote{In contrast, for $n=2$, \citet*{amiet2021better} find that ``better''- (rather than best-) response dynamics converge to a PNE (whenever there is one) with high probability as $m \rightarrow \infty$.} (ii) Short cycles all have about the same probability. Indeed, for $k = o(\sqrt{m})$ the probability is asymptotically $\sqrt{\pi/m}$. Finally, (iii) it is very unlikely that the best-response dynamic converges to a very long cycle: if $k/\sqrt{m} \to \infty$ then the probability that the dynamic converges to a cycle of length at least $2k$ tends to 0.\footnote{When $k=o(\sqrt{m})$, the argument of $\Phi(\cdot)$ goes to zero. Since $\Phi(0)=1/2$ we have that the convergence probability goes to $\sqrt{\pi/m}$ which is independent of $k$. If, instead, $k/\sqrt{m} \rightarrow \infty$ then the argument of $\Phi(\cdot)$ grows large and since $\Phi(\infty)=1$, the convergence probability goes zero. Our proof of Proposition \ref{prop:F1} derives the asymptotics for $k = o(m^{2/3})$. The standard normal has small tails outside this range.}

\begin{theorem}\label{thm3}
Set $m_1=m_2=m$ and fix $x>0$. The probability that the $s_\texttt{c}$-best-response dynamic on $G_{2,\mathbf{m}}$ does not hit a cycle (of any length) until at least time step $x\sqrt{2m}$ is asymptotically $\exp\{-x^2/2\}$ as $m \rightarrow \infty$.
\end{theorem}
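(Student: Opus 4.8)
The plan is to interpret ``hitting a cycle'' as the event that the best-response path returns, for the first time, to a profile it has already visited (equivalently, that it completes a traversal of some cycle of the best-response digraph), to express the probability that this first return happens at a given time through Theorem~\ref{thm2}, and then to perform a Gaussian asymptotic estimate. Throughout, $n=2$ and $m_1=m_2=m$, and I write $T:=x\sqrt{2m}$.

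The first step is the structural reduction. In a $2$-player game the clockwork best-response dynamic always eventually enters a cycle, and — inspecting how the path zig-zags through the grid of profiles — it first \emph{returns} to an already-visited profile at time $t+2k$ exactly when it converges to a genuine $2k$-cycle at time $t$ (so that $\mathbf{a}^t,\dots,\mathbf{a}^{t+2k-1}$ are pairwise distinct), the only exceptions being a sink reached at time $t$, which is ``revisited'' already at time $t+1$, and degenerate situations in which some player is already best-responding (each of probability $O(1/m)$). Hence
\begin{equation*}
\Pr\!\left[\text{no return before step }T\right]
=\sum_{\substack{t\ge1,\;k\ge1\\ t+2k\ge T}}\Pr\!\left[s_\texttt{c}\text{-best-response dynamic converges to a }2k\text{-cycle at time }t\right]+o(1),
\end{equation*}
where the error term (bounded by $O(1/\sqrt m)$ via the estimates underlying Proposition~\ref{prop:F1}) is $o(1)$.

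Next I would insert \eqref{eq:thm21}. For $m_1=m_2=m$ its right-hand side equals $\tfrac1m P_{t+2k-2}$, where $P_u:=\prod_{i=1}^{u}\bigl(1-\tfrac1m\floor*{i/2}\bigr)$, and this depends on $(t,k)$ only through $t+2k$; summing over the $\sim\ell/2$ pairs $(t,k)$ with $t+2k=\ell$ therefore gives $\Pr[\text{first return at time }\ell]\sim\tfrac{\ell}{2m}P_{\ell-2}$, so that $\Pr[\text{no return before step }T]\sim\tfrac1{2m}\sum_{\ell\ge T}\ell\,P_{\ell-2}$. Finally, using $\sum_{i=1}^{u}\floor*{i/2}=\floor*{u/2}\ceil*{u/2}=\tfrac{u^2}{4}+O(1)$ and $\sum_{i=1}^{u}\floor*{i/2}^{2}=O(u^3)$, a Taylor expansion of $\log(1-\cdot)$ gives $P_u=\exp\{-\tfrac{u^2}{4m}\}\bigl(1+o(1)\bigr)$ uniformly for $u=O(\sqrt m)$ (and the tail $u\gg\sqrt m\log m$ is super-polynomially small); a Riemann-sum comparison — valid since the summand varies only on scale $\sqrt m$ — then yields
\begin{equation*}
\frac{1}{2m}\sum_{\ell\ge T}\ell\,P_{\ell-2}\;\sim\;\frac{1}{2m}\int_{T}^{\infty}\ell\,e^{-\ell^2/(4m)}\,d\ell\;=\;e^{-T^2/(4m)}\;=\;e^{-x^2/2},
\end{equation*}
which is the claim.

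The step I expect to be the main obstacle is the structural reduction in the second paragraph: pinning down precisely when the path first revisits a profile (equivalently, when some player first re-encounters an environment), handling the off-by-one for sinks together with the edge cases coming from the initial profile and from already-best-responding players, and verifying that all of these together contribute only $o(1)$. Once the probability is in the form $\tfrac1{2m}\sum_{\ell\ge T}\ell\,P_{\ell-2}$, the remainder is standard Laplace-type asymptotics, the only points of care being the uniform control of the error in the exponent of $P_u$ over the window $u=\Theta(\sqrt m)$ and the tail/Riemann-sum estimates.
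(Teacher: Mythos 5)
Your proposal is correct and reaches the paper's conclusion, but by a longer route than the paper's own proof. The paper's argument is a one-step identification plus a product estimate: the survival event is exactly the event, already analyzed in the proof of Theorem \ref{thm2}, that no player has yet played an action equal to an environment previously faced by the opponent, whose probability is $\eta_t=\prod_{i=1}^{t}\bigl(1-\tfrac{1}{m}\floor*{i/2}\bigr)$; Stirling's formula then gives $\eta_t\sim\exp\{-t^{2}/(4m)\}$ uniformly for $t=o(m^{2/3})$, and evaluating at $t=x\sqrt{2m}$ finishes. You instead decompose the survival event over the eventually reached cycle, sum \eqref{eq:thm21} over pairs $(t,k)$ with $t+2k\geq x\sqrt{2m}$, and recover the Gaussian tail by a Laplace/Riemann-sum computation; since \eqref{eq:thm21} was itself built from the no-repetition probability, your resummation re-derives what the direct identification gives in one line, at the cost of an extra integration and the $k=1$/initial-profile bookkeeping, while the shared analytic core is $P_u\sim e^{-u^{2}/(4m)}$ (which you obtain by expanding the logarithm rather than via Stirling---both are fine, and your uniformity window $u=o(m^{2/3})$ matches the paper's). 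One remark on the ``structural reduction'' you flag as the main obstacle: you read ``hits a cycle'' as the first return to an already-visited profile (time $t+2k$ for a $2k$-cycle entered at $t$), whereas the paper's formal definition of hitting a $2k$-cycle at time $t$ is the entry time $t$ itself. Your reading is the one the paper's proof actually computes ($\eta_t$ is precisely a no-repetition-yet probability), and it is what produces the limit $e^{-x^{2}/2}$; under the literal entry-time reading the limit would instead be $\sqrt{2\pi}\int_{x}^{\infty}\bigl(1-\Phi(z)\bigr)\,dz$, as one can check from Proposition \ref{prop:F1}. So your interpretation is consistent with what is being proved, and your treatment of the $O(1)$ discrepancies (sinks revisited one step early, degenerate first moves) as $o(1)$ corrections is adequate.
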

This result shows that the clockwork sequence best-response dynamic in $2$-player games is likely to converge to a 2$k$-cycle (for some $k \in \{1,...,m\}$) within $\sqrt{2m}$ time steps when $m$ is large.

We now compare the behavior of the clockwork sequence best-response dynamic in 2-player games with the behavior of the random sequence best-response dynamic in 2-player games. (i) The probability of convergence to a PNE is the same for clockwork and for random playing sequences in 2-player games. The reason is that, under the random playing sequence, players' actions do not change whenever the sequence asks the same player to play several times in a row. The profiles that are therefore visited along the path are the same under both playing sequences, which induces the same probability of convergence to equilibrium. However, (ii) the expected game duration will be different since the random playing sequence introduces delays. In fact, the expected game duration for the random playing sequence should be greater than for the clockwork playing sequence by a factor of 2. The reason is that, under the clockwork playing sequence, the players alternate at the tick of each time step, whereas, under the random playing sequence, the time it takes for the playing sequence to turn to the other player is Geometric($\frac{1}{2}$). Thus the random playing sequence can be considered as a slowing down of the clockwork playing sequence in which the expected time to play the next step is 2.

\section{Simulation results}\label{sec:simulations}
In Sections \ref{sec:clock} and \ref{sec:all}, we run simulations of the clockwork and random sequence best-response dynamics. Our main goal is to investigate the extent to which our asymptotic results are also valid for a small number of players and actions. In these simulations, for each choice of $n$ and $\mathbf{m}$, we randomly draw 10 batches of 1000 games. We run the best-response dynamic on each game and find the mean frequency of convergence to equilibrium in each batch, and then report the mean across the batches. The error bars in our figures are intervals of one empirical standard deviation (across the means for each batch).

In Section \ref{sec:periodic} we investigate the equilibrium convergence probability of playing sequences that interpolate between the extremes of clockwork and random playing sequences, and pay particular attention to the speed of convergence.

\subsection{Simulations of clockwork best-response dynamics}\label{sec:clock}
The blue markers in Figure \ref{fig:scaling} show the frequency of convergence to a PNE in our simulations for different values of numbers of players and actions. In both panels, the solid black line is the analytical probability of convergence to a PNE in 2-player $m$-action games, calculated using equation \eqref{eq:n=2} with $m_1=m_2=m$.

In the top panel, we present simulation outcomes for $2$, $3$, and $4$-player games in which all players have the same number of actions. Up to sampling noise, our analytical result for $2$-player games perfectly matches the numerical simulations. We also find that convergence frequency becomes lower for a given number of actions as the number of players increases.

\begin{figure}[htbp]
\centering
\includegraphics[width=\linewidth]{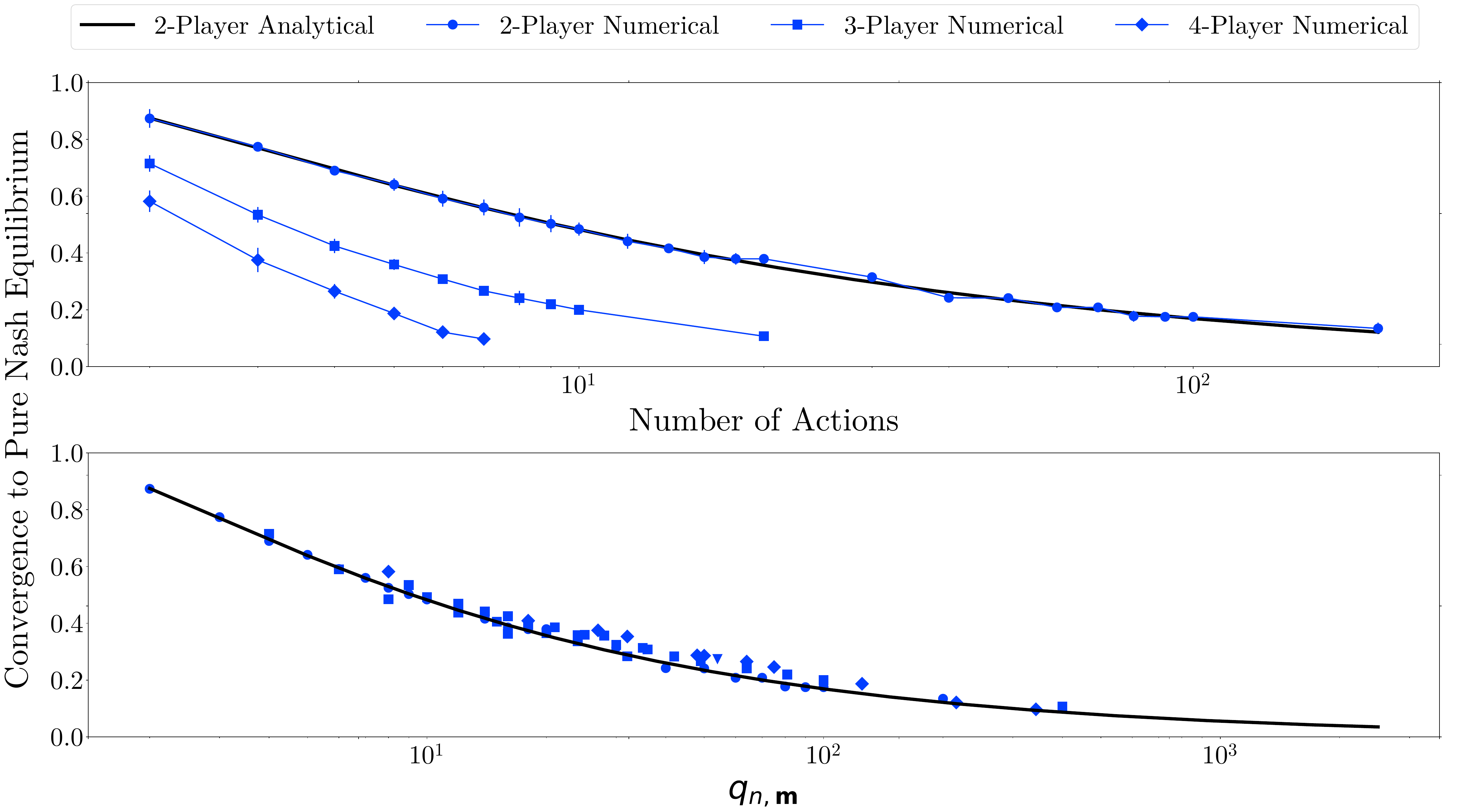}
\caption{Frequency of convergence to a PNE for the clockwork best-response dynamic.}
\label{fig:scaling}
\end{figure}
\begin{figure}[htbp]
\includegraphics[width=\linewidth]{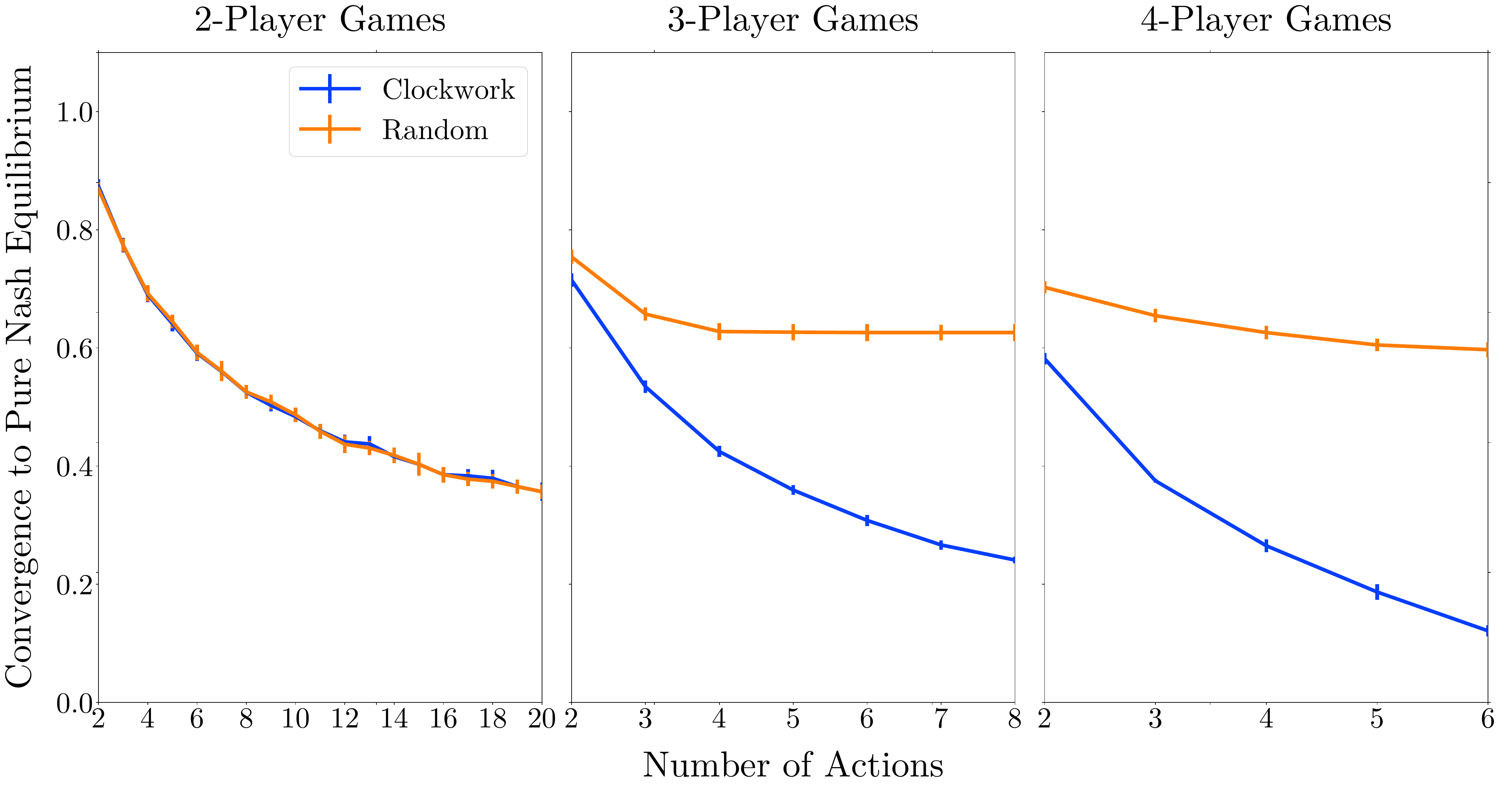}
\caption{Frequency of convergence to a PNE for clockwork vs.\ random best-response dynamics.}
\label{fig:share_only}
\end{figure}

The blue markers in the bottom panel of Figure \ref{fig:scaling} are the simulation means for different values of $n$ and $\mathbf{m}$, all chosen to ensure that the minimal number of environments in those games match the number of environments in a 2-player $m$-action game. All markers line up reasonably well along the solid black line. Corollary \ref{corollary:scaling} implies that the asymptotic convergence probability in games $G_{n,\mathbf{m}}$ and $G_{n',\mathbf{m}'}$ is  approximately the same whenever $q_{n,\mathbf{m}}=q_{n',\mathbf{m}'}$. Our results show that this relation holds even for relatively small games.

\subsection{Simulations of random best-response dynamics}\label{sec:all}

Figure \ref{fig:share_only} shows the frequency of convergence to a PNE under clockwork vs.\ random best-response dynamics in $n$-player games with $m$ actions per player.\footnote{The results also hold if we allow for different numbers of actions per player.}

As argued in Section \ref{sec:n=2}, when there are only $n=2$ players, the random playing sequence has the same convergence probability as the clockwork playing sequence, which can be seen in the left panel of Figure \ref{fig:share_only}. 

Looking across the panels, the frequency of convergence to a PNE is decreasing in both $n$ and $m$ for the clockwork playing sequence, but the random playing sequence is different because its frequency of convergence rapidly settles near $1-1/e$ for $n>2$. Recall, \citet*{amiet2021pure} proved that the random sequence best-response dynamic always converges to a PNE if there is one when $m=2$ and $n\rightarrow \infty$. As argued in Section \ref{sec:n>2}, this gives us an unconditional probability of convergence of  $1-1/e \approx 63\%$. Our simulations show that the result of \citet*{amiet2021pure} also appears to hold for games with more than two actions provided $n>2$. In fact, the random sequence best-response dynamic almost always converges to a PNE in games that have a PNE even for relatively small values of $n$ and $m$.

\subsection{Simulations of periodic best-response dynamics}\label{sec:periodic}
The analytical results of Section \ref{sec:theory} allowed us to compare the behavior of two extreme playing sequences: clockwork and random. We now turn our attention to intermediate cases. A playing sequence is $p$-\emph{periodic} if it consists of a sequence of players of length $p \geq n$ that is repeated forever, with the constraint that each player appears at least once in the repeated sequence. In other words, 
\[
i_1, i_2, ... , i_p, \; i_1, i_2, ..., i_p, \;  i_1, i_2, ..., i_p, \; ...
\]
is a $p$-periodic playing sequence if for each $i \in [n]$ there is some $j \in [p]$ such that $i_j = i$.

We generate $p$-periodic playing sequences at random as follows: construct a sequence of $p-n$ integers drawn at random from $[n]$ and append the numbers $1,...,n$ to this sequence. This results in a sequence of $p$ integers. Now select a random permutation of this sequence and call it $\sigma$. Then $\sigma, \sigma, \sigma,....$ is a $p$-period playing sequence. Clearly, if $p=n$, we recover clockwork playing sequences. And, fixing $n$, we recover random playing sequences for $p \rightarrow \infty$.

\begin{figure}[htbp]
\centering
\includegraphics[width=\linewidth]{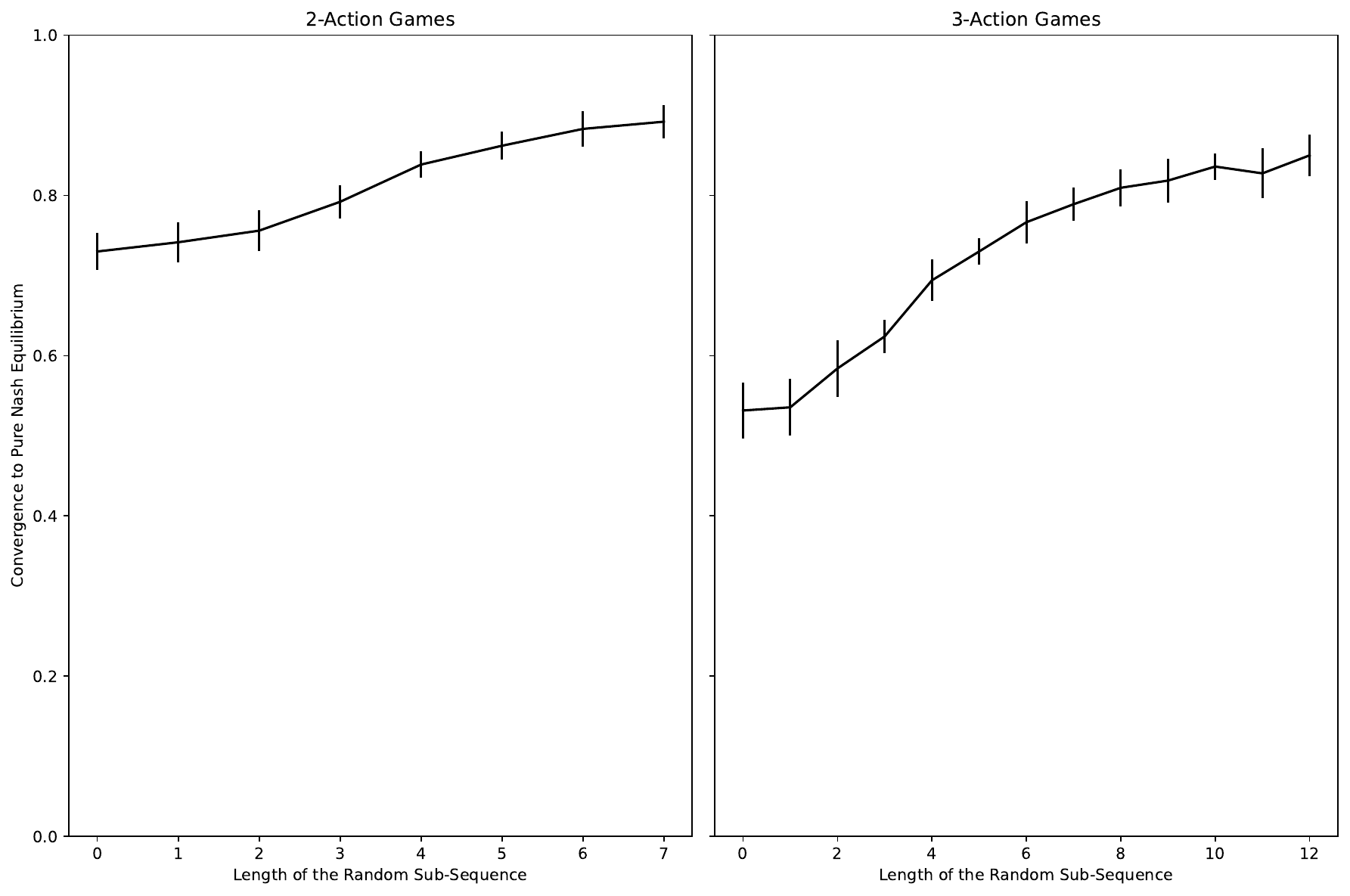}
\caption{Frequency of convergence to a PNE for $p$-periodic best-response dynamics in $n=3$ player games.}
\label{fig:p_share}
\end{figure}
\begin{figure}[htbp]
\includegraphics[width=\linewidth]{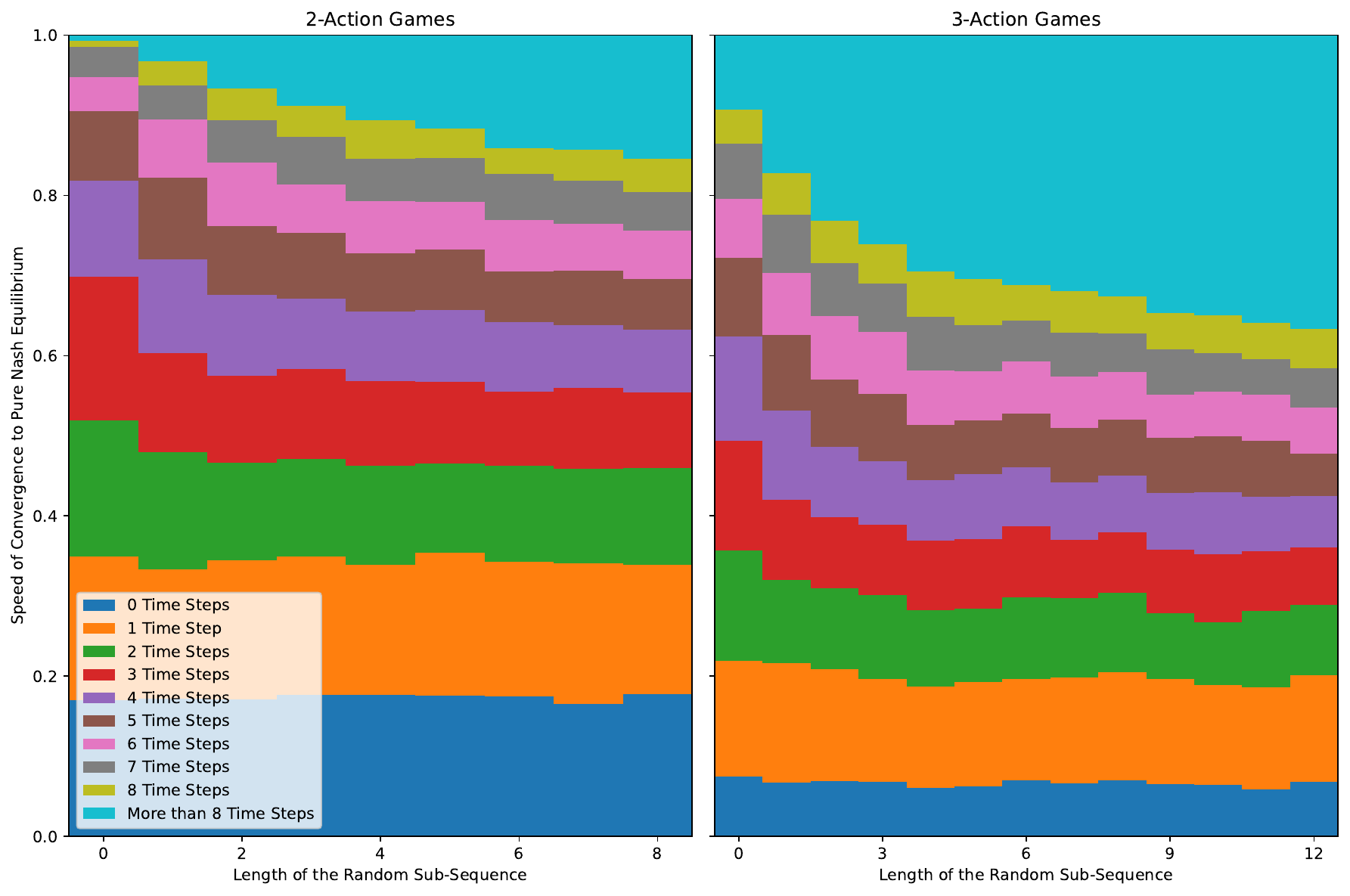}
\caption{Conditional speed of convergence to a PNE for $p$-periodic best-response dynamics in $n=3$ player games.}
\label{fig:p_speed}
\end{figure}

Figure \ref{fig:p_share} plots the frequency of convergence to a pure Nash equilibrium against the length of the random subsequence (namely $p-n$) for $p$-periodic best-response dynamics in $n=3$ player games with $2$ and $3$ actions per player. As might be expected, the probability of convergence to a Nash equilibrium for $p$-periodic playing sequences is increasing in $p$. 

Figure \ref{fig:p_speed} plots, for different lengths of the random subsequence, the distribution of the number of time steps until a pure Nash equilibrium is reached \emph{conditional} on converging to a pure Nash equilibrium for $p$-periodic best-response dynamics in $n=3$ player games with $2$ and $3$ actions per player. Interestingly, conditional on converging to a Nash equilibrium, the average number of time steps to reach equilibrium is increasing in $p$. This relates to the findings of \cite{durand2016complexity} who showed that, in expectation, convergence to equilibrium in potential games is faster under the clockwork playing sequence than under any other playing sequence. Here, our results indicate that, over the space of all games, the speed of convergence (conditional on converging to equilibrium) is slower for playing sequences that have a larger share of random elements (i.e. a larger $p-n$). The apparent trade-off between the success in finding equilibria vs.\ the speed of convergence to equilibria is an interesting area for future research.

\clearpage
\appendix

\section{Proofs}
The appendix concerns only the clockwork best-response dynamic and presents proofs for the results stated in the main body of the paper.

\subsection{Proof of Theorem \ref{thm}}
We start by stating two lemmas that will be used to prove Theorem \ref{thm}. Lemma \ref{prop:after_t} bounds the probability that the clockwork sequence best-response dynamic converges to a pure Nash equilibrium after time $t$. Lemma \ref{prop:by_t} bounds the probability that the clockwork sequence best-response dynamic converges to a pure Nash equilibrium by time $t$.

\begin{lemma}\label{prop:after_t}
Let $\langle \p{A},s_\texttt{c} \rangle$ be generated according to Algorithm \ref{alg2}. For any $t \in \mathbb{N}$,
$$\Pr
\left[ \langle \p{A},s_\texttt{c} \rangle \text{ \emph{hits PNE$(G_{n,\mathbf{m}})$ after} }t \right] \leq \exp\left\{ - \frac{ (\floor*{\frac{t}{n} - 1})^2 }{ 2 q_{n,\mathbf{m}}} \right\} .$$
\end{lemma}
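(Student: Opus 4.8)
The plan is to bound the probability that the clockwork best-response path survives (i.e.\ has not yet hit a sink) for a long time by controlling how quickly a player re-encounters a previously seen environment. The key observation is that, because payoffs are drawn independently from an atomless distribution, the best-response correspondence is not revealed to us ``all at once'': as the path is traversed, we can expose each player's best-response to a given environment only at the moment that environment is first visited. As long as no player revisits an environment they have seen before, each step of the path takes the dynamic to a vertex whose out-edge (for the next player to move) is, conditionally, uniform over that player's $m_i$ actions — so in particular there is a probability $1/m_i \geq 1/\max_j m_j$ that the path has reached a sink (all $n$ out-edges being self-loops, equivalently each player best-responding). This is the coupling idea alluded to in comment (ii) after Theorem \ref{thm}: until the first re-encounter, the path behaves like a process in which fresh independent randomness is drawn at every step.

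First I would set up the exposure/coupling argument precisely: define the stopping time $\tau$ at which some player first moves into (or is asked to respond to) an environment that player has already encountered along the path, and observe that before $\tau$ the dynamic has not yet converged unless it has hit a sink. Second, I would argue that between consecutive moves of a \emph{fixed} player — which under the clockwork sequence happen exactly $n$ steps apart — that player's environment changes, and the number of \emph{distinct} environments that player can see before a repeat is at most the total number $\prod_{j\neq i} m_j$ of environments for that player, which is at least $q_{n,\mathbf{m}}$ for every $i$. More carefully, since after $t$ steps each player has had roughly $\lfloor t/n \rfloor$ turns, a re-encounter by time $t$ forces one player to have visited $\lfloor t/n\rfloor$ environments with a repeat among them; I would bound the probability of this ``birthday''-type collision. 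Third — and this is where the $\exp\{-(\lfloor t/n - 1\rfloor)^2/(2q_{n,\mathbf{m}})\}$ shape comes from — I would use the standard bound $\prod_{j=0}^{k-1}(1 - j/q) \leq \exp\{-\sum_{j=0}^{k-1} j/q\} = \exp\{-k(k-1)/(2q)\}$ with $q = q_{n,\mathbf{m}}$ and $k \approx \lfloor t/n\rfloor - 1$, applied to the probability that a player completes that many turns without either hitting a sink or re-encountering an environment. The event ``$\langle \p{A}, s_\texttt{c}\rangle$ hits PNE after $t$'' is contained in the event that no sink was reached in the first $\approx \lfloor t/n\rfloor$ turns of some player while all environments stayed fresh, which is dominated by exactly this product.

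The main obstacle I expect is making the ``fresh randomness until first re-encounter'' claim fully rigorous in the presence of the genuine path-dependence of best-response dynamics: once a player has been to an environment, their action there is fixed forever, so naively the increments are not independent. The clean way around this is a coupling with an auxiliary process on a re-randomized digraph (draw a fresh uniform best-response each time an environment is visited), show the two processes agree up to the first re-encounter time $\tau$, and then bound everything for the auxiliary process, where the increments genuinely are independent and the birthday/collision estimate and the product bound apply directly. A secondary technical point is bookkeeping with the floors: translating ``$t$ time steps'' into ``at least $\lfloor t/n\rfloor - 1$ completed turns with fresh environments for some player,'' and checking the off-by-one in the exponent; this is routine once the coupling is in place. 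I would also need to confirm that the minimal environment count $q_{n,\mathbf{m}} = \prod_i m_i / \max_i m_i$ is the right denominator — it is, because the player with the most actions has the fewest environments, and a collision for \emph{any} player suffices, so the worst (smallest) environment space governs the bound.
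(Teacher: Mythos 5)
Your proposal follows essentially the same route as the paper's proof: you couple the path-dependent dynamic with a memoryless clockwork random walk that agrees with it up to the first re-encounter of an environment (the paper's Algorithms 3--4 and Remarks 1--2), observe that hitting a PNE after $t$ forces no re-encounter by $t$ (the paper's Remark 3, $T_{\langle \p{a}, s_\texttt{c} \rangle} < F_{\langle \p{a}, s_\texttt{c} \rangle}$), restrict attention to the player with the most actions (hence fewest environments, of cardinality $q_{n,\mathbf{m}}$), and apply the birthday-type bound $1+x\leq \exp\{x\}$ to the product of no-collision probabilities, exactly as in the paper's Lemma \ref{lem:bounds}. The remaining items you flag (the off-by-one floors and making the coupling rigorous) are handled in the paper just as you anticipate, so the plan is correct.
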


\begin{lemma}\label{prop:by_t}
Let $\langle \p{A},s_\texttt{c} \rangle$ be generated according to Algorithm \ref{alg2}. For any $t \in \mathbb{N}$,
$$\floor*{\frac{t}{n}} \frac{1}{q_{n,\mathbf{m}} }  \left(1- \frac{(\ceil*{\frac{t}{n}})^2}{2} \frac{n}{q_{n,\mathbf{m}} }\right) \leq \Pr \left[ \langle \p{A},s_\texttt{c} \rangle \text{ \emph{hits PNE$(G_{n,\mathbf{m}})$ by} }t  \right] \leq \frac{t}{q_{n,\mathbf{m}}} .$$
\end{lemma}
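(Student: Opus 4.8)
\textbf{Proof proposal for Lemma \ref{prop:by_t}.}

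The plan is to analyze the clockwork dynamic round by round, where a ``round'' consists of $n$ consecutive time steps during which each player moves exactly once. After $r := \floor*{t/n}$ complete rounds, each player $i$ has had $r$ opportunities to best-respond. The key structural fact (coming from the coupling argument referenced in the discussion of Theorem \ref{thm}) is that a best-response path continues to travel along fresh edges of the digraph — and hence behaves like a path drawn vertex-by-vertex at random — for as long as no player re-encounters an environment they have seen before. As long as this holds, at each step the player to move lands on a uniformly random one of their $m_i$ actions, independently of the past; the current profile is a sink precisely when, simultaneously, every player is already best-responding to the current environment. First I would set up this ``no-collision'' event and note that on it, the probability of being at a PNE at a given step after all $n$ players have moved at least once is exactly $1/q_{n,\mathbf{m}}$: indeed $\Pr[\text{player } i \text{ best-responding}] = 1/m_i$ independently, so the profile is a sink with probability $\prod_i 1/m_i$, but since the most recent mover is best-responding by construction, we gain a factor $\max_i m_i$, giving $\big(\prod_i 1/m_i\big)\cdot \max_i m_i = 1/q_{n,\mathbf{m}}$.

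\textbf{Upper bound.} For the inequality $\Pr[\text{hits PNE by }t] \le t/q_{n,\mathbf{m}}$ I would use a union bound over time steps. At each step $\tau \le t$ (after the first round, so that all players have moved), the conditional probability that the profile reached is a sink is at most $1/q_{n,\mathbf{m}}$ — this bound holds even without the no-collision event, because the player who just moved is best-responding, and the event that the remaining $n-1$ players are all simultaneously best-responding to the current environment has probability at most $\prod_{j\ne i}1/m_j \le 1/q_{n,\mathbf{m}}$ (here one has to be slightly careful, conditioning on the history: the payoffs determining the other players' best responses at the current environment are either already revealed — in which case the relevant profile was already visited or the path would have stopped — or fresh; a short argument shows the sink probability is dominated by $1/q_{n,\mathbf{m}}$ regardless). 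Summing over $\tau = 1,\dots,t$ gives $t/q_{n,\mathbf{m}}$, which also absorbs the trivial contribution from the first few steps.

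\textbf{Lower bound.} This is the more delicate direction and I expect it to be the main obstacle. The idea is a second-moment / Bonferroni-type inequality: letting $E_j$ be the event that the path is at a PNE at the end of round $j$ (i.e.\ at time step $jn$), inclusion–exclusion gives
\begin{equation*}
\Pr\left[\bigcup_{j=1}^{r} E_j\right] \ge \sum_{j=1}^{r}\Pr[E_j] - \sum_{1\le j<j'\le r}\Pr[E_j \cap E_{j'}].
\end{equation*}
On the no-collision event each $\Pr[E_j]$ is $1/q_{n,\mathbf{m}}$, contributing $r/q_{n,\mathbf{m}} = \floor*{t/n}/q_{n,\mathbf{m}}$; one then has to show the probability mass lost to collisions among the first $r \le \ceil*{t/n}$ rounds, and to the pairwise terms $\Pr[E_j\cap E_{j'}]$, is at most the subtracted correction $\floor*{t/n}\cdot\frac{(\ceil*{t/n})^2}{2}\cdot\frac{n}{q_{n,\mathbf{m}}^2}$. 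The collision bound comes from a birthday-problem estimate: over $r$ rounds, player $i$ sees $r$ environments out of at least $q_{n,\mathbf{m}}$, so the chance of a repeat is at most $\binom{r}{2}/q_{n,\mathbf{m}}$, and summing over players (and noting $\sum_i 1/q \le n/q$ after accounting for which coordinate is dropped) yields a bound of the stated order $\tfrac{r^2 n}{2q_{n,\mathbf{m}}}$ times $1/q_{n,\mathbf{m}}$ once multiplied through. The bookkeeping to make the constants line up exactly with $\floor*{t/n}\frac{1}{q_{n,\mathbf{m}}}\big(1 - \frac{(\ceil*{t/n})^2}{2}\frac{n}{q_{n,\mathbf{m}}}\big)$ is where the real work lies — in particular verifying that the pairwise intersection terms $\Pr[E_j \cap E_{j'}]$ are each $O(1/q_{n,\mathbf{m}}^2)$ (they should factor approximately, since hitting a sink at round $j$ and a — necessarily different — sink at round $j'$ are nearly independent events once we are on the no-collision event) and folding everything into the single clean correction term. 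I would handle this by first proving the clean statement on the no-collision event and then transferring to the unconditional probability by subtracting $\Pr[\text{collision within } r \text{ rounds}]$, absorbing that into the quadratic correction.
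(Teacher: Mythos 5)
There is a genuine gap, and it is fatal to your lower bound. You set $E_j$ to be the event that the path sits at a PNE at the end of round $j$ and propose a Bonferroni bound, arguing that the pairwise terms $\Pr[E_j\cap E_{j'}]$ ``factor approximately'' because hitting a sink at round $j$ and a different sink at round $j'$ are nearly independent. But the dynamic is absorbed at any PNE it reaches, so $E_j\subseteq E_{j'}$ whenever $j<j'$: there is never a ``different sink later'', and $\Pr[E_j\cap E_{j'}]=\Pr[E_j]$, which is of order $1/q_{n,\mathbf{m}}$, not $1/q_{n,\mathbf{m}}^2$. With $r=\floor{t/n}$ rounds the subtracted term is then of order $r^2/q_{n,\mathbf{m}}$, and the Bonferroni bound is vacuous (negative) precisely in the relevant range $r\asymp\sqrt{q_{n,\mathbf{m}}}$. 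Your per-step probability claims are also incorrect as stated. Unconditionally, $\Pr[\mathbf{A}^\tau\in\mathrm{PNE}(G_{n,\mathbf{m}})]$ equals the probability of having converged by $\tau$ (eventually of order $1/\sqrt{q_{n,\mathbf{m}}}$), so it is not bounded by $1/q_{n,\mathbf{m}}$ ``even without the no-collision event''; the union bound for the upper bound must be run over \emph{first}-hitting events (this is repairable via a deferred-decisions argument, but it is exactly the path-dependence issue the paper's coupling is built to handle, not a short aside). And the claim that at the end of a round the sink probability is exactly $\bigl(\prod_i 1/m_i\bigr)\cdot\max_i m_i=1/q_{n,\mathbf{m}}$ is wrong: the factor you gain is $m_{s_\texttt{c}(\tau)}$, the action count of the most recent mover (under clockwork, player $n$ at a round's end), not $\max_i m_i$, and further automatic matches occur for any player after whom nobody changed action; since for the lower bound you need ``$\geq 1/q_{n,\mathbf{m}}$'', this inequality points the wrong way unless player $n$ happens to have the most actions.

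The idea that repairs all of this---and which the paper uses---is absent from your plan: sum over the \emph{location} of the equilibrium. Writing the event as a disjoint union over profiles $\mathbf{x}$ of ``$\mathbf{x}\in\mathrm{PNE}(G_{n,\mathbf{m}})$ and the path hits $\mathbf{x}$ by $t$'', one has $\Pr[\mathbf{x}\in\mathrm{PNE}(G_{n,\mathbf{m}})]=1/\prod_j m_j$ exactly, and conditional on $\mathbf{x}$ being a PNE the dynamic is distributed as the coupled process with a sink planted at $\mathbf{x}$ (Remark \ref{rem:4}), which agrees with the memoryless clockwork random walk until either some player repeats an environment or some player faces $\mathbf{x}_{-i}$. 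Hitting the fixed target then reduces to the walk matching the fixed environment $\mathbf{x}_{-i}$ at one of the turns of a player $i$ with $m_i=\max_j m_j$: each such match has probability exactly $1/q_{n,\mathbf{m}}$, a union bound over $\tau\le t$ yields the upper bound $t/q_{n,\mathbf{m}}$ after summing over $\mathbf{x}$, and---crucially---conditional on no environment repeats these matching events at distinct turns of player $i$ are \emph{mutually exclusive}, so their probabilities add exactly and no inclusion--exclusion or control of pairwise intersections is needed; the birthday-type estimate (Lemma \ref{lem:bounds}) then supplies the factor $1-\frac{(\ceil{t/n})^2}{2}\frac{n}{q_{n,\mathbf{m}}}$. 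Your birthday estimate for collisions is in the right spirit, but without fixing and planting the target equilibrium you obtain neither the exact $1/q_{n,\mathbf{m}}$ per-turn probability nor the disjointness that replaces your unworkable second-moment step.
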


We now show how Theorem \ref{thm} follows from Lemmas \ref{prop:after_t} and \ref{prop:by_t} and, in Section \ref{sec:lemmas}, we provide proofs for the lemmas themselves.
\begin{proof}[Proof of Theorem \ref{thm}]
Let $\langle \p{A},s_\texttt{c} \rangle$ be generated according to Algorithm \ref{alg2}. The probability that the $s_\texttt{c}$-best-response dynamic on $G_{n,\mathbf{m}}$ converges to a PNE is equal to the probability that $\langle \p{A},s_\texttt{c} \rangle$ hits PNE$(G_{n,\mathbf{m}})$. Let us start with the upper bound. For any $t \in \mathbb{N}$,
\begingroup
\allowdisplaybreaks
\begin{align}
\Pr&\left[ \langle \p{A},s_\texttt{c} \rangle \text{ hits PNE$(G_{n,\mathbf{m}})$}\right] \nonumber \\
&= \Pr\left[ \langle \p{A},s_\texttt{c} \rangle \text{ hits PNE$(G_{n,\mathbf{m}})$ by }t  \right] + \Pr\left[\langle \p{A},s_\texttt{c} \rangle \text{ hits PNE$(G_{n,\mathbf{m}})$ after }t  \right] \nonumber \\
&\leq  \frac{t}{q_{n,\mathbf{m}}} +\exp\left\{ - \frac{ (\floor*{\frac{t}{n} - 1})^2 }{ 2 q_{n,\mathbf{m}} } \right\}. \label{eq:upper} 
\end{align}
\endgroup
Equation \eqref{eq:upper} follows from Lemmas \ref{prop:after_t} and \ref{prop:by_t}. Now, set 
$$t = n \left( \ceil*{ \sqrt{2 q_{n,\mathbf{m}} \log ( q_{n,\mathbf{m}} )} } +1 \right).$$ 
Since $n \geq 2$ and $m_i \geq 2$ for all $i$, we have $\sqrt{2 q_{n,\mathbf{m}} \log ( q_{n,\mathbf{m}} )}>1$, so
$$n \left( \sqrt{2 q_{n,\mathbf{m}} \log ( q_{n,\mathbf{m}} )} + 1 \right) \leq t \leq n \left( \sqrt{2 q_{n,\mathbf{m}} \log ( q_{n,\mathbf{m}} )} + 2 \right) <  3n \sqrt{2 q_{n,\mathbf{m}} \log ( q_{n,\mathbf{m}} )}.$$ 
It follows that 
\begin{equation}\label{eq:ineq1}
\frac{t}{ q_{n,\mathbf{m}} } < 3n \sqrt{\frac{2\log(q_{n,\mathbf{m}})}{q_{n,\mathbf{m}}}} ,
\end{equation}
and that
\begin{equation}\label{eq:ineq2}
\exp\left\{ - \frac{ ( \floor*{ \frac{t}{n} - 1} )^2 }{ 2 q_{n,\mathbf{m}} } \right\}  \leq \frac{1}{ q_{n,\mathbf{m}} } <  n\sqrt{\frac{2\log(q_{n,\mathbf{m}})}{q_{n,\mathbf{m}}}} .
\end{equation}
Adding the upper bounds in \eqref{eq:ineq1} and \eqref{eq:ineq2} yields the desired result.

Let us now turn to the lower bound. For any $t \in \mathbb{N}$,
\begin{align}
\Pr\left[ \langle \p{A},s_\texttt{c} \rangle \text{ hits PNE$(G_{n,\mathbf{m}})$}\right] &\geq \Pr\left[ \langle \p{A},s_\texttt{c} \rangle \text{ hits PNE$(G_{n,\mathbf{m}})$ by }t  \right] \nonumber \\
&\geq \floor*{\frac{t}{n}} \frac{1}{q_{n,\mathbf{m}} }  \left(1- \frac{(\ceil*{\frac{t}{n}})^2}{2} \frac{n}{q_{n,\mathbf{m}} }\right). \label{eq:lower}
\end{align}
Equation \eqref{eq:lower} follows from Lemma \ref{prop:by_t}. Now, set
$$ t = n \floor*{  \frac{\sqrt{ q_{n,\mathbf{m}} }}{ \sqrt{n} }  } .$$
Then,
\begin{equation}\label{eq:ineq3}
1-  \frac{(\ceil{\frac{t}{n}})^2}{2} \frac{n}{q_{n,\mathbf{m}}} \geq \frac{1}{2} .
\end{equation}
And since $n \geq 2$ and $m_i\geq 2$ for all $i$, we have $t \geq \frac{1}{2}\sqrt{n q_{n,\mathbf{m}}} $, so
\begin{equation}\label{eq:ineq4}
\floor*{\frac{t}{n}} \frac{1}{q_{n,\mathbf{m}} } \geq \frac{1}{2 \sqrt{n} } \frac{1}{ \sqrt{ q_{n,\mathbf{m}} } } .
\end{equation}
Multiplying the lower bounds in \eqref{eq:ineq3} and \eqref{eq:ineq4} together yields the desired result.
\end{proof}

\subsection{Lemmas}
\label{sec:lemmas}
We now turn to the proofs of Lemmas \ref{prop:after_t} and \ref{prop:by_t}. These require additional notation which we introduce here.

The notion of convergence given in Section \ref{subsec:convergence} applies to all playing sequences but we can provide a more direct characterization of convergence (and non-convergence) in terms of path properties when the sequence is clockwork. We refer to one complete rotation of the clockwork sequence as a \emph{round} of play; e.g.\ if a round starts at player $i$ then each player plays once in order and the round is complete when it is once again $i$'s turn to play. For any $k \in  \mathbb{N}$ define
$$ T_{\langle \p{a}, s_\texttt{c} \rangle}(k) :=  \inf \left\{t \in \mathbb{N} : \mathbf{a}^t = \mathbf{a}^{t+nk} \text{ and }  \mathbf{a}^t \neq \mathbf{a}^{t+nk'} \text{ for all } k' \in \mathbb{N} \text{ such that } k' < k \right\} , $$
to be the first time at which an action profile is repeated $k$ rounds later (and at no earlier round). If $T_{\langle \p{a}, s_\texttt{c} \rangle}(k)$ is finite, the path $\langle \p{a}, s_\texttt{c} \rangle$ has the property that from time $T_{\langle \p{a}, s_\texttt{c} \rangle}(k)$ onwards, the sequence of $nk$ possibly non-distinct action profiles $\mathbf{a}^t,...,\mathbf{a}^{t+nk-1}$ repeats itself forever, and we say that the path $\langle \p{a}, s_\texttt{c} \rangle$ (first) hits an $nk$-\emph{cycle} at time $T_{\langle \p{a}, s_\texttt{c} \rangle}(k)$. Note that there is exactly one $k$ such that $T_{\langle \p{a}, s_\texttt{c} \rangle}(k)$ is finite.

If the action profile is $\mathbf{a}^t$ at some time $t$ and no one deviates from this profile in a single round (i.e.\ $\mathbf{a}^t = \mathbf{a}^{t+n}$), then $\mathbf{a}^t$ must be a PNE. Therefore, if the path $\langle \p{a}, s_\texttt{c} \rangle$ hits an $nk$-cycle at time $T_{\langle \p{a}, s_\texttt{c} \rangle}(k)$ and $k=1$ ($k>1$) then the clockwork sequence best-response dynamic converges to a PNE (a best-response cycle of length $nk$) at that time. 

Let 
$$ T_{\langle \p{a}, s_\texttt{c} \rangle} :=  \inf \left\{T_{\langle \p{a}, s_\texttt{c} \rangle}(k) : k \in \mathbb{N}  \right\} , $$
denote the first time (necessarily finite) at which the path $\langle \p{a}, s_\texttt{c} \rangle$ hits a PNE or a best-response cycle.

For any path $\langle \p{a}, s_\texttt{c} \rangle$ and for each $t \in \mathbb{N}$ define
$$f_{\langle \p{a}, s_\texttt{c} \rangle}(t) := \min \left\{u \leq t : \mathbf{a}_{-s_\texttt{c}(u)}^{u-1} = \mathbf{a}_{-s_\texttt{c}(t)}^{t-1}  \text{ and }  s_\texttt{c}(u)=s_\texttt{c}(t) \right\} .$$
So $f_{\langle \p{a}, s_\texttt{c} \rangle}(t)$ is the first time along the path $\langle \p{a}, s_\texttt{c} \rangle$ that player $s_\texttt{c}(t)$ encounters the environment $\mathbf{a}_{-s_\texttt{c}(t)}^{t-1}$. Finally, define
$$F_{\langle \p{a}, s_\texttt{c} \rangle}:= \inf \left\{t \in \mathbb{N} : f_{\langle \p{a}, s_\texttt{c} \rangle}(t) < t \right\}.$$
So $F_{\langle \p{a}, s_\texttt{c} \rangle}$ is the first time (necessarily finite) at which some player encounters an environment that they encountered previously along the path.

Remark \ref{rem:3} notes that any path generated by the clockwork best-response dynamic must hit a PNE or a best-response cycle before any player encounters an environment for the second time. 
\begin{remark}\label{rem:3}
$T_{\langle \p{a}, s_\texttt{c} \rangle} < F_{\langle \p{a}, s_\texttt{c} \rangle}$.
\end{remark}
\noindent Roughly speaking, $T_{\langle \p{a}, s_\texttt{c} \rangle}$ denotes the time at which the path $\p{a}$ hits an $nk$-cycle (for some $k \geq 1$) whereas $F_{\langle \p{a}, s_\texttt{c} \rangle}$ denotes the time at which the path completes its first circuit.

The quantities $T_{\langle \p{a}, s_\texttt{c} \rangle}(k)$, $T_{\langle \p{a}, s_\texttt{c} \rangle}$, and  $F_{\langle \p{a}, s_\texttt{c} \rangle}$ are illustrated in an example in Figure \ref{fig:example2}.\ \\

\begin{figure}
\centering
\begin{tabular}{cc}
%%%%%%%%%%%%%%%%%%%%%%%%%%%%%%%%%%%%%%%%
\begin{subfigure}[b]{0.35\textwidth}
\begin{tikzpicture}[
scale=2.75,
>=stealth',
roundnode/.style={rectangle, draw=white, fill=white,inner sep=0,outer sep=0,minimum size=0.5cm}]   
\foreach \x in {1,2}
\foreach \y in {1,2}
\foreach \z in {1,2}
{\node[] (\z\x\y) at (\x,\y,\z) {$\circ$};} %(\z,\x,\y)

% color coordinates
\node[roundnode] (112) at (112) {$\mathbf{a}^0,\mathbf{a}^1$};
\node[roundnode] (122) at (122) {$\mathbf{a}^2$};
\node[roundnode] (121) at (121) {$\mathbf{a}^3,\mathbf{a}^4$};

% bottom face P1
\path[->] (111) edge [thick ]  (211);
\path[->] (221) edge [thick ]  (121);
% bottom face P2
\path[->] (111) edge [thick ]  (121);
\path[->] (211) edge [thick ]  (221);

% top face P1
\path[->] (212) edge [thick ]  (112);
\path[->] (122) edge [thick ]  (222);
% top face P2
\path[->] (112) edge [thick ]  (122);
\path[->] (222) edge [thick ]  (212);

% between faces
\path[->] (212) edge [thick ]  (211);
\path[->] (112) edge [thick ]  (111);
\path[->] (122) edge [thick ]  (121);
\path[->] (221) edge [thick ]  (222);
\end{tikzpicture}
\caption{}
\label{fig:gameB}
\end{subfigure}
&
%%%%%%%%%%%%%%%%%%%%%%%%%%%%%%%%%%%%%%%%
\begin{subfigure}[b]{0.35\textwidth}
\hspace*{-0.6cm}
\begin{tikzpicture}[
scale=2.75,
>=stealth',
roundnode/.style={rectangle, draw=white, fill=white,inner sep=0,outer sep=0,minimum size=0.5cm}]  
\foreach \x in {1,2}
\foreach \y in {1,2}
\foreach \z in {1,2}
{\node[] (\z\x\y) at (\x,\y,\z) {$\circ$};} %(\z,\x,\y)

% color coordinates
%\node[roundnode] (x) at (121) {$\bullet$};
\node[roundnode] (111) at (111) {$\mathbf{a}^0$};
\node[roundnode] (211) at (211) {$\mathbf{a}^1,\mathbf{a}^6,\mathbf{a}^7$};
\node[roundnode] (221) at (221) {$\mathbf{a}^2$};
\node[roundnode] (222) at (222) {$\mathbf{a}^3,\mathbf{a}^4$};
\node[roundnode] (212) at (212) {$\mathbf{a}^5$};

% bottom face P1
\path[->] (111) edge [thick ]  (211);
\path[->] (221) edge [thick ]  (121);
% bottom face P2
\path[->] (111) edge [thick ]  (121);
\path[->] (211) edge [thick ]  (221);

% top face P1
\path[->] (212) edge [thick ]  (112);
\path[->] (122) edge [thick ]  (222);
% top face P2
\path[->] (112) edge [thick ]  (122);
\path[->] (222) edge [thick ]  (212);

% between faces
\path[->] (212) edge [thick ]  (211);
\path[->] (112) edge [thick ]  (111);
\path[->] (122) edge [thick ]  (121);
\path[->] (221) edge [thick ]  (222);
\end{tikzpicture}
\caption{}
\label{fig:gameC}
\end{subfigure}

\end{tabular}
\caption{The digraphs above are identical and correspond to the best-response digraph of the game shown in Figure \ref{fig:example} but we now omit labels to avoid clutter.
In panel (A) the initial profile is set to $\mathbf{a}^0$. The first few elements of the infinite sequence $\p{a}$ are shown. Once at the profile $\mathbf{a}^3$ at $t=3$, which is the unique PNE, the path remains there forever. Here, $T_{\langle \p{a}, s_\texttt{c} \rangle}=T_{\langle \p{a}, s_\texttt{c} \rangle}(1) = 3$ and $F_{\langle \p{a}, s_\texttt{c} \rangle}=6$.
In panel (B) we have a different initial profile $\mathbf{a}^0$. The path moves to the bottom left corner on the front face of the cube at $t= 1$ and then cycles forever among the four profiles on the front face of the cube. In fact, $T_{\langle \p{a}, s_\texttt{c} \rangle}=T_{\langle \p{a}, s_\texttt{c} \rangle}(2) = 1$, so the path hits a $6$-cycle at time 1: once reached, the (not all distinct) action profiles in the sequence $\mathbf{a}^1,...,\mathbf{a}^6$ are repeated forever. Here, $F_{\langle \p{a}, s_\texttt{c} \rangle}=7$.}\label{fig:example2}
\end{figure}
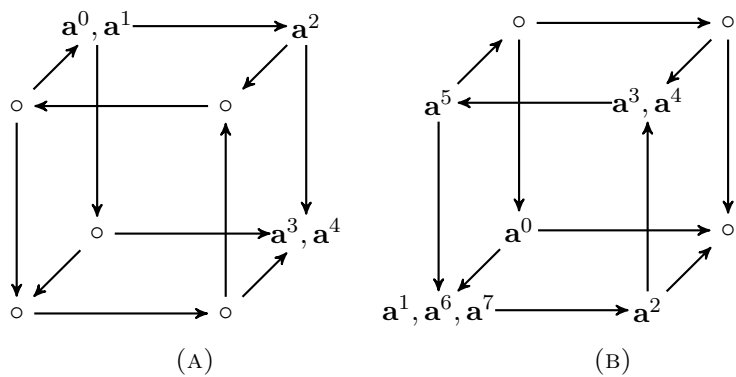

The main challenge posed by paths generated according to Algorithm \ref{alg2} is that they have ``memory'': whenever player $s_\texttt{c}(t)$ encounters an environment that she has encountered before (i.e. $\mathbf{A}_{-s_\texttt{c}(t)}^{t-1} = \mathbf{A}_{-s_\texttt{c}(u)}^{u-1}$ for some $u<t$ with $s_\texttt{c}(t)=s_\texttt{c}(u)$) then, at time $t$, the player must play the same action that she played when she previously encountered the environment (i.e. $A_{s_\texttt{c}(t)}^t = A_{s_\texttt{c}(u)}^u$). This path-dependence complicates the analysis of the clockwork best-response dynamic. We therefore study a simpler (random walk) process that is ``memoryless'' to which we couple a dynamic that induces the same distribution over paths as Algorithm \ref{alg2}. The coupled dynamic follows the random walk process until an environment is encountered by some player for the second time and becomes deterministic thereafter. We elaborate on our argument's reliance on this coupling after the proof of Lemma \ref{prop:after_t}.

The coupled system is described by Algorithms \ref{alg4} and \ref{alg5} and is illustrated in Figure \ref{fig:example4}. $\langle \p{X},s_\texttt{c} \rangle$ and $\langle \p{Y},s_\texttt{c} \rangle$ denote paths generated according to Algorithms \ref{alg4} and \ref{alg5} respectively.
\begin{algorithm}
\caption{Clockwork random walk\label{alg4}} 
\texttt{
\begin{enumerate}[topsep=0pt,leftmargin=*]
\item Draw an initial profile $\mathbf{X}^0$ uniformly at random from $\mathcal{M}$
\item For $t \in \mathbb{N}$:
\begin{enumerate}
\item Set $i=s_\texttt{c}(t)$
\item Set $\mathbf{X}_{-i}^{t} = \mathbf{X}_{-i}^{t-1}$
\item Independently draw $X_i^{t}$ uniformly at random from $[m_i]$
\end{enumerate}
\end{enumerate}
}
\end{algorithm}

\begin{algorithm}
\caption{Coupled dynamic\label{alg5}} 
\texttt{
\begin{enumerate}[topsep=0pt,leftmargin=*]
\item Set $R_i(\mathbf{a}_{-i})=0$ for all $i \in [n]$ and $\mathbf{a}_{-i} \in \mathcal{M}_{-i}$
\item Set the initial action profile to $\mathbf{Y}^0 = \mathbf{X}^0$
\item For $t \in \mathbb{N}$:
\begin{enumerate}
\item Set $i=s_\texttt{c}(t)$
\item Set $\mathbf{Y}_{-i}^{t} = \mathbf{Y}_{-i}^{t-1}$
\item If $R_i(\mathbf{Y}_{-i}^{t-1})=0$: set $Y_i^{t} = X_i^{t}$ and $R_i(\mathbf{Y}_{-i}^{t-1}) = Y_i^t$\\
If $R_i(\mathbf{Y}_{-i}^{t-1})\neq 0$: set $Y_i^{t} = R_i(\mathbf{Y}_{-i}^{t-1})$
\end{enumerate}
\end{enumerate}
}
\end{algorithm}

Algorithm \ref{alg4} is a ``clockwork random walk'' on the set of action profiles $\mathcal{M}$. The walk starts at some randomly drawn initial profile $\mathbf{X}^0$ and, at each time $t$, moves in direction $s_\texttt{c}(t)$ to a profile chosen uniformly at random from among the $m_{s_\texttt{c}(t)}$ profiles in that direction. A path generated according to this process does not have memory.

Algorithm \ref{alg5} describes the coupled dynamic. The process starts at the same initial profile as the clockwork random walk. For each player $i$ and environment $\mathbf{a}_{-i}$, we set the initial ``response'' value $R_i(\mathbf{a}_{-i})$ to zero and update it at step (3c) of the algorithm in the following manner: if the response value to the current environment $\mathbf{Y}_{-i}^{t-1}$ is zero, then the environment was never encountered before and, in that case, player $i$'s response value is set to $X_i^t$, the action drawn by the clockwork random walk at time $t$. If, on the other hand, the response value to the current environment $\mathbf{Y}_{-i}^{t-1}$ is  non-zero (i.e.\ the environment was encountered before), then this value is the action that $i$ takes at time $t$. In other words, $\langle \p{Y},s_\texttt{c} \rangle$ has the same memory property that is characteristic of paths generated according to Algorithm \ref{alg2}.

Algorithm \ref{alg2} essentially draws a best-response digraph ``up-front'', then selects an initial profile and traces a path by traveling along the edges of the digraph starting at the initial profile and moving in direction $s_\texttt{c}(t)$ at step $t$. In contrast, Algorithm \ref{alg5} starts with an empty digraph and then generates its edges in an ``online'' manner. Nevertheless, both algorithms induce the same distribution over paths, as summarized in the following remark.

\begin{remark}\label{rem:1}
Let $\langle \p{A}, s_\texttt{c} \rangle$ and $\langle \p{Y}, s_\texttt{c} \rangle$ be generated according to Algorithms \ref{alg2} and \ref{alg5} respectively. Then $\langle \p{A}, s_\texttt{c} \rangle$ and $\langle \p{Y}, s_\texttt{c} \rangle$ have the same distribution.
\end{remark}

By construction, the sequences $\p{X}$ and $\p{Y}$ must agree at least up to (but not including) the time at which some player encounters an environment for the second time. At such a time, under Algorithm \ref{alg5}, the player must play the action determined by their response function evaluated at that environment but, under Algorithm \ref{alg4}, the next action may be any of the available actions for that player. Remark \ref{rem:2} summarizes the key relationship between the clockwork random walk and the coupled dynamic.
\begin{remark}\label{rem:2}
$F_{\langle \p{X},s_\texttt{c} \rangle} = F_{\langle \p{Y},s_\texttt{c} \rangle}$.
\end{remark}

\begin{figure}
\centering
%%%%%%%%%%%%%%%%%%%%%%%%%%%%%%%%%%%%
\begin{tabular}{cc}
\begin{subfigure}[b]{0.35\textwidth}
\hspace*{0cm}
\begin{tikzpicture}[
scale=3,
>=stealth',
roundnode/.style={rectangle, draw=white, fill=white,inner sep=0,outer sep=0,minimum size=0.5cm}]  
\foreach \x in {1,2}
\foreach \y in {1,2}
\foreach \z in {1,2}
{\node[] (\z\x\y) at (\x,\y,\z) {$\circ$};} %(\z,\x,\y)

% color coordinates
%\node[roundnode] (x) at (121) {$\bullet$};
\node[roundnode] (111) at (111) {$\mathbf{X}^0,\mathbf{X}^6,\mathbf{X}^7$};
\node[roundnode] (211) at (211) {$\mathbf{X}^1,\mathbf{X}^2$};
\node[roundnode] (212) at (212) {$\mathbf{X}^3$};
\node[roundnode] (112) at (112) {$\mathbf{X}^4,\mathbf{X}^5$};
\node[roundnode] (121) at (121) {$\mathbf{X}^8$};

% bottom face P1
\path[-] (111) edge [thick,dotted ]  (211);
\path[-] (221) edge [thick,dotted ] (121);
% bottom face P2
\path[-] (111) edge [thick,dotted ]  (121);
\path[-] (211) edge [thick,dotted ]  (221);

% top face P1
\path[-] (212) edge [thick,dotted ] (112);
\path[-] (122) edge [thick,dotted ]  (222);
% top face P2
\path[-] (112) edge [thick,dotted ] (122);
\path[-] (222) edge [thick,dotted ]  (212);

% between faces
\path[-] (211) edge [thick,dotted ] (212);
\path[-] (112) edge [thick,dotted ]  (111);
\path[-] (122) edge [thick,dotted ]  (121);
\path[-] (221) edge [thick,dotted ]   (222);
\end{tikzpicture}
\caption{}
\label{fig:gameC}
\end{subfigure}
&
\begin{subfigure}[b]{0.35\textwidth}
\hspace*{0cm}
\begin{tikzpicture}[
scale=3,
>=stealth',
roundnode/.style={rectangle, draw=white, fill=white,inner sep=0,outer sep=0,minimum size=0.5cm}]  
\foreach \x in {1,2}
\foreach \y in {1,2}
\foreach \z in {1,2}
{\node[] (\z\x\y) at (\x,\y,\z) {$\circ$};} %(\z,\x,\y)

% color coordinates
%\node[roundnode] (x) at (121) {$\bullet$};
\node[roundnode] (111) at (111) {$\mathbf{Y}^0,\mathbf{Y}^6$};
\node[roundnode] (211) at (211) {$\mathbf{Y}^1,\mathbf{Y}^2,\mathbf{Y}^7,\mathbf{Y}^8$};
\node[roundnode] (212) at (212) {$\mathbf{Y}^3$};
\node[roundnode] (112) at (112) {$\mathbf{Y}^4,\mathbf{Y}^5$};

% bottom face P1
\path[->] (111) edge [thick ]  node[blue,pos=0.5,above left] {1} (211);
\path[-] (221) edge [thick,dotted ] (121);
% bottom face P2
\path[-] (111) edge [thick,dotted ]  (121);
\path[<-] (211) edge [thick ]  node[blue,pos=0.5,below] {2} (221);

% top face P1
\path[->] (212) edge [thick ] node[blue,pos=0.5,above left] {4} (112);
\path[-] (122) edge [thick,dotted ]  (222);
% top face P2
\path[<-] (112) edge [thick ] node[blue,pos=0.5,above] {5} (122);
\path[-] (222) edge [thick,dotted ]  (212);

% between faces
\path[->] (211) edge [thick ] node[blue,pos=0.5,left] {3} (212);
\path[->] (112) edge [thick ]  node[blue,pos=0.5,right] {6} (111);
\path[-] (122) edge [thick,dotted ]  (121);
\path[-] (221) edge [thick,dotted ]   (222);
\end{tikzpicture}
\caption{}
\label{fig:gameC}
\end{subfigure}
\end{tabular}
\caption{Illustration of Algorithms \ref{alg4} and \ref{alg5}. Panel (A) shows the first few elements of a possible path $\langle \p{X},s_\texttt{c} \rangle$ generated according to the clockwork random walk starting at the profile $\mathbf{X}^0$. Panel (B), illustrates the first few elements of the corresponding path $\langle \p{Y},s_\texttt{c} \rangle$ generated according to Algorithm \ref{alg5}, starting with an empty digraph and numbering the directed edges according to the time at which they are first placed. 
The paths in panels (A) and (B) are identical up to and including time 6. At time step 7, however, player 1 encounters the same environment that she had encountered at time 1 ($F_{\langle \p{X},s_\texttt{c} \rangle} = F_{\langle \p{Y},s_\texttt{c} \rangle}=7$); namely, players 2 and 3 each choosing action 1. The first time that player 1 encountered this environment, she responded by playing action 2, so she must play action 2 again at time 7. From then on, the path in panel (B) will keep cycling among the action profiles on the left-hand side of the cube forever whereas the path in panel (A) is allowed to wander freely.}\label{fig:example4}
\end{figure}
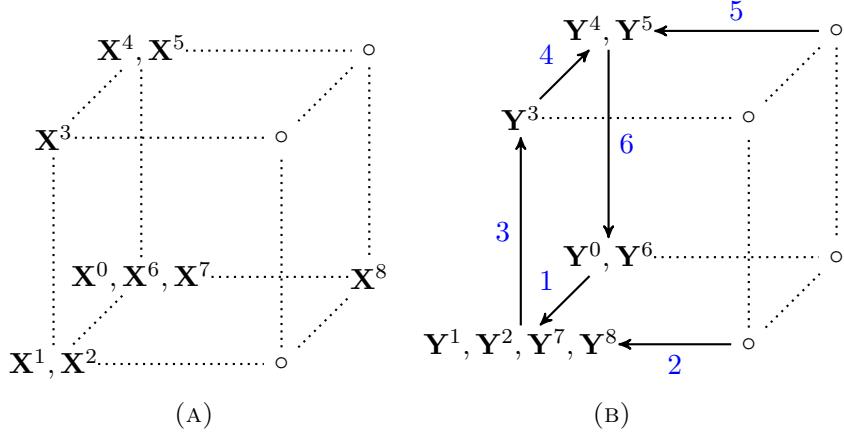

The lemma below, which concerns paths $\langle \p{X},s_\texttt{c} \rangle$ that are generated by the clockwork random walk, is useful for proving Lemmas \ref{prop:after_t} and \ref{prop:by_t}. Under the clockwork sequence, player $i \in [n]$ plays at time $h_i(k) := i + (k-1)n$ for $k \in \mathbb{N}$. For any $i \in [n]$ and any time $t \geq i$, define 
$$k^*_i (t) := 1 + \floor*{\frac{t-i}{n}} .$$
So $k^*_i (t)$ is the largest $k \in \mathbb{N}$ such that $h_i(k) \leq t$. Between times 1 and $t$ (inclusive), player $i \in [n]$ plays at times $h_i(1),h_i(2),...,h_i(k^*_i(t))$ and encounters environments $\mathbf{X}_{-i}^{h_i(1) - 1},\mathbf{X}_{-i}^{h_i(2) - 1},...,\mathbf{X}_{-i}^{h_i(k^*_i(t))-1}$. Lemma \ref{lem:bounds} establishes bounds on the probability that these environments are all distinct.

Define $\mu:=\prod_{i \in [n]}m_i$ to be the cardinality of $\mathcal{M}$.
\begin{lemma}\label{lem:bounds}
For any $i \in[n] $ and $t \in \mathbb{N}$,
\begin{equation*}
1 - \frac{m_i}{\mu} \frac{(\ceil*{\frac{t}{n}})^2}{2} \leq \Pr\left[\mathbf{X}_{-i}^{h_i(k) - 1} \text{ \emph{for} } k \in \{1,...,k^*_i(t)\} \text{ \emph{are all distinct}}\right] \leq \exp\left\{ - \frac{ m_i(\floor*{\frac{t}{n} - 1})^2 }{ 2\mu } \right\} .
\end{equation*}
\end{lemma}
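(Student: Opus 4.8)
The quantity of interest is the probability that the environments $\mathbf{X}_{-i}^{h_i(k)-1}$ encountered by player $i$ at her first $k^*_i(t)$ turns are pairwise distinct. Since each time player $i\neq s_\texttt{c}(t)$ plays, the other players' coordinates are frozen except that between two consecutive turns of player $i$ exactly $n-1$ other players have each independently re-randomized their coordinate once, the sequence of environments $\mathbf{X}_{-i}^{h_i(1)-1},\mathbf{X}_{-i}^{h_i(2)-1},\dots$ is itself a Markov chain: at each step, each of the $n-1$ other coordinates is replaced by a fresh uniform draw from its action set. Crucially, because every coordinate is completely refreshed between turns, the environment at turn $k+1$ is a uniformly random element of $\mathcal{M}_{-i}$, \emph{independent} of all previous environments. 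So the problem reduces to a clean birthday problem: we are throwing $K:=k^*_i(t)$ i.i.d.\ uniform balls into $|\mathcal{M}_{-i}| = \mu/m_i$ boxes, and we want the probability that all $K$ balls land in distinct boxes.

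**Key steps.** First I would establish the independence/uniformity claim rigorously: condition on the environment at player $i$'s $k$-th turn, and observe that by step (2c) of Algorithm~\ref{alg4} each of the other $n-1$ coordinates gets overwritten by an independent uniform draw before player $i$'s $(k+1)$-th turn, so the new environment is uniform on $\mathcal{M}_{-i}$ and independent of the past. (One must be slightly careful at the very start if $t<i$, i.e.\ $k^*_i(t)$ could be $0$ or the first environment is the initial one, but $\mathbf{X}^0$ is also drawn uniformly, so the first environment is uniform too.) Second, write the exact birthday probability as $\prod_{j=0}^{K-1}\left(1 - \frac{j}{\mu/m_i}\right) = \prod_{j=0}^{K-1}\left(1 - \frac{j\,m_i}{\mu}\right)$. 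Third, for the \textbf{lower bound} use $\prod_j(1-x_j) \geq 1 - \sum_j x_j$ (valid when all $x_j\in[0,1]$), giving $1 - \frac{m_i}{\mu}\sum_{j=0}^{K-1} j = 1 - \frac{m_i}{\mu}\cdot\frac{K(K-1)}{2} \geq 1 - \frac{m_i}{\mu}\cdot\frac{K^2}{2}$, then bound $K = k^*_i(t) = 1+\lfloor (t-i)/n\rfloor \leq \lceil t/n\rceil$. Fourth, for the \textbf{upper bound} use $1-x \leq e^{-x}$ to get $\prod_{j=0}^{K-1}\left(1-\frac{jm_i}{\mu}\right) \leq \exp\left\{-\frac{m_i}{\mu}\sum_{j=0}^{K-1} j\right\} = \exp\left\{-\frac{m_i}{\mu}\cdot\frac{K(K-1)}{2}\right\}$, then bound $K-1 = \lfloor(t-i)/n\rfloor \geq \lfloor t/n - 1\rfloor$ (using $i\leq n$) and $K \geq K-1$, so $\frac{K(K-1)}{2} \geq \frac{(\lfloor t/n-1\rfloor)^2}{2}$, yielding the stated bound. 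Care is needed that $\lfloor t/n - 1\rfloor$ may be negative or zero for small $t$, in which case the exponential bound is $\geq 1$ and trivially holds; similarly the lower bound may be negative and holds vacuously.

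**Main obstacle.** The only genuinely non-trivial point is the independence-and-uniformity reduction — justifying that the sequence of environments seen by a \emph{fixed} player at her successive turns consists of i.i.d.\ uniform draws over $\mathcal{M}_{-i}$. This hinges on the fact that under the clockwork order every one of the other $n-1$ players moves exactly once between two consecutive turns of player $i$, and each such move is an independent fresh uniform draw (Algorithm~\ref{alg4}, step 2c), so no coordinate of the environment carries over. Once this is in hand, everything else is the standard birthday-problem estimate with the two elementary inequalities $\prod(1-x_j)\geq 1-\sum x_j$ and $1-x\leq e^{-x}$, plus routine floor/ceiling bookkeeping to translate $k^*_i(t)$ into the $\lceil t/n\rceil$ and $\lfloor t/n-1\rfloor$ appearing in the statement.
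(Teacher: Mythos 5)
Your proposal is correct and follows essentially the same route as the paper's proof: both reduce the problem to a birthday-type estimate by observing that the environments player $i$ sees at successive turns are independent uniform draws on $\mathcal{M}_{-i}$ (the paper phrases this as the environments being disjoint subsets of the walk's draws, you phrase it as full coordinate refreshment between turns), yield the exact product $\prod_{k=1}^{k^*_i(t)-1}\bigl(1-\tfrac{m_i k}{\mu}\bigr)$, and then apply the Weierstrass product inequality for the lower bound and $1-x\leq e^{-x}$ for the upper bound together with the same floor/ceiling comparisons of $k^*_i(t)$ with $\ceil*{t/n}$ and $\floor*{t/n-1}$. The only cosmetic difference is that the paper disposes of the degenerate case $k^*_i(t)>1+\mu/m_i$ explicitly, which you cover implicitly via your validity caveats.
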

\begin{proof}
For any $i \in [n]$, the environments $\mathbf{X}_{-i}^{h_i(1) - 1}, \mathbf{X}_{-i}^{h_i(2) - 1},...,\mathbf{X}_{-i}^{h_i(k^*_i(t)) - 1}$ are independent because they are disjoint subsets of the draws of the clockwork random walk. Each environment is distributed uniformly on $\mathcal{M}_{-i}$, and since $\mathcal{M}_{-i}$ has cardinality $\frac{\mu}{m_i}$,
\begin{equation}\label{eq:prodk}
\Pr\left[\mathbf{X}_{-i}^{h_i(k) - 1} \text{ for } k \in \{1,...,k^*_i(t)\} \text{ are all distinct} \right] = \prod_{k=1}^{k^*_i(t) - 1} \left( 1 - \frac{m_i k }{ \mu } \right) .
\end{equation}
If $k^*_i(t) > 1 + \frac{\mu}{m_i}$ then the probability in \eqref{eq:prodk} must be zero, and the lemma holds trivially ($k^*_i(t) > 1 + \frac{\mu}{m_i}$ implies $\floor*{\frac{t-i}{n}} > \frac{\mu}{m_i}$ which, in turn, implies $\ceil*{\frac{t}{n}}>\frac{\mu}{m_i}$, so the lower bound in the statement of the lemma is negative and the upper bound is positive). We will therefore consider the case in which $k^*_i(t) \leq 1 + \frac{\mu}{m_i}$.

We obtain the following upper bound:
\begin{align*}
\prod_{k=1}^{k^*_i(t)-1} \left( 1 - \frac{m_i k}{\mu} \right) &\leq \prod_{k=1}^{k^*_i(t)-1} \exp\left\{- \frac{m_i k}{ \mu }\right\} \leq \exp\left\{ - \frac{ m_i (k^*_i(t)-1)^2 }{2 \mu } \right\}  \leq \exp\left\{ - \frac{ m_i(\floor*{\frac{t}{n} - 1})^2 }{ 2\mu } \right\} .
\end{align*}
The first step follows from $ \exp\{x\}\geq1+x$ for all $x$. The final inequality follows from $k^*_i(t) - 1 = \floor*{\frac{t-i}{n}} \geq \floor*{\frac{t-n}{n}} = \floor*{\frac{t}{n} - 1}$.

We now turn to the lower bound:
\begin{align*}
\prod_{k=1}^{k^*_i(t) - 1} \left( 1 - \frac{m_i k}{\mu} \right) \geq 1 - \sum_{k=1}^{k^*_i(t) - 1} \frac{m_i k}{\mu} \geq1-  \frac{m_i}{\mu} \frac{k^*_i(t)^2}{2} \geq 1 - \frac{m_i}{\mu} \frac{(\ceil*{\frac{t}{n}})^2}{2} .
\end{align*}
The first step is an application of the Weierstrass product inequality. The final inequality follows from the fact that $k^*_i(t) = 1 + \floor*{\frac{t-i}{n}} \leq 1 + \floor*{\frac{t-1}{n}} = \ceil*{\frac{t}{n}}$.
\end{proof}

Define $m^* := \max_{i\in[n]} m_i$, so that $q_{n,\mathbf{m}} = \frac{\mu}{m^*}$.
\begin{proof}[Proof of Lemma \ref{prop:after_t}]
Recall that $T_{\langle \p{A},s_\texttt{c} \rangle}$ is the first time at which the path $\langle \p{A}, s_\texttt{c} \rangle$ hits a PNE or a best-response cycle. So $T_{\langle \p{A},s_\texttt{c} \rangle} > t$ is the event that $\langle \p{A},s_\texttt{c} \rangle$ hits PNE$(G_{n,\mathbf{m}})$ or a best-response cycle only after time $t$. It follows that
$$\Pr\left[\langle \p{A},s_\texttt{c} \rangle \text{ hits PNE$(G_{n,\mathbf{m}})$ after }t  \right] \leq \Pr\left[T_{\langle \p{A},s_\texttt{c} \rangle} > t \right] .$$
By Remarks \ref{rem:3}, \ref{rem:1}, and \ref{rem:2},
$$\Pr\left[T_{\langle \p{A},s_\texttt{c} \rangle} > t \right] \leq \Pr\left[F_{\langle \p{A},s_\texttt{c} \rangle} > t \right] = \Pr\left[F_{\langle \p{Y},s_\texttt{c} \rangle} > t \right]  = \Pr\left[F_{\langle \p{X},s_\texttt{c} \rangle} > t \right].$$
Now, let us focus on the path $\langle \p{X},s_\texttt{c} \rangle$ and consider a player $i$ satisfying $m_i = m^*$. The environments that player $i$ faces between times 1 and $t$ are given in the sequence $\mathbf{X}_{-i}^{h_i(1) - 1},\mathbf{X}_{-i}^{h_i(2) - 1},...,\mathbf{X}_{-i}^{h_i(k^*_i(t))-1}$. The event $F_{\langle \p{X},s_\texttt{c} \rangle} > t$ implies that the environments in this sequence are all distinct. Hence
$$ \Pr\left[F_{\langle \p{X},s_\texttt{c} \rangle} > t \right] \leq \Pr\left[\mathbf{X}_{-i}^{h_i(k) - 1} \text{ for } k \in \{1,...,k^*_i(t)\} \text{ are all distinct} \right] \leq \exp\left\{ - \frac{ (\floor*{\frac{t}{n} - 1})^2 }{ 2 q_{n,\mathbf{m}} } \right\},$$
where the final step follows from Lemma \ref{lem:bounds}.
\end{proof}

The proof of Lemma \ref{prop:after_t} illustrates why we study a coupled system. Finding an upper bound on the probability that $\langle \p{A},s_\texttt{c} \rangle$ hits PNE$(G_{n,\mathbf{m}})$ after $t$ is central to our proof of Theorem \ref{thm}. Our key step consists in showing that this probability is bounded above by the probability that the environments $\mathbf{X}_{-i}^{h_i(1) - 1}$, $\mathbf{X}_{-i}^{h_i(2) - 1}$,..., $\mathbf{X}_{-i}^{h_i(k^*_i(t))-1}$, which are generated by the clockwork random walk, are all distinct. This latter probability is easy to work out because the environments are independent uniform random draws. To avoid coupling, one might be tempted to argue that since the probability that $\langle \p{A},s_\texttt{c} \rangle$ hits PNE$(G_{n,\mathbf{m}})$ after $t$ is bounded above by the probability that the environments $\mathbf{A}_{-i}^{h_i(1) - 1},\mathbf{A}_{-i}^{h_i(2) - 1},...,\mathbf{A}_{-i}^{h_i(k^*_i(t))-1}$ generated by Algorithm \ref{alg2} are all distinct, one only needs to work out this latter probability. But this probability is not straightforward to work out: \emph{these} environments are not independent uniform random draws since they are generated by a path-dependent process.

To prove Lemma \ref{prop:by_t}, we introduce a slight modification of Algorithm \ref{alg5}. Algorithm \ref{alg6}, which describes a dynamic that is also coupled with the clockwork random walk, is identical to Algorithm \ref{alg5} except that for some particular profile $\mathbf{x}$ the algorithm is initialized with $R_i(\mathbf{x}_{-i})=x_i$ for all $i \in [n]$. This effectively ``plants'' a sink in the digraph (at $\mathbf{x}$).
\begin{algorithm}
\caption{Coupled dynamic with sink $\mathbf{x}$\label{alg6}} 
\texttt{
\begin{enumerate}[topsep=0pt,leftmargin=*]
\item Set $R_i(\mathbf{a}_{-i})=0$ for all $i \in [n]$ and $\mathbf{a}_{-i} \in \mathcal{M}_{-i}$
\item Set $R_i(\mathbf{x}_{-i})=x_i$ for all $i \in [n]$
\item Set the initial action profile to $\mathbf{Z}^0 = \mathbf{X}^0$
\item For $t \in \mathbb{N}$:
\begin{enumerate}
\item Set $i=s_\texttt{c}(t)$
\item Set $\mathbf{Z}_{-i}^{t} = \mathbf{Z}_{-i}^{t-1}$
\item If $R_i(\mathbf{Z}_{-i}^{t-1})=0$: set $Z_i^{t} = X_i^{t}$ and $R_i(\mathbf{Z}_{-i}^{t-1}) = Z_i^t$\\
If $R_i(\mathbf{Z}_{-i}^{t-1})\neq 0$: set $Z_i^{t} = R_i(\mathbf{Z}_{-i}^{t-1})$
\end{enumerate}
\end{enumerate}
}
\end{algorithm}
In the remaining steps, Algorithm \ref{alg6} selects a random initial profile and starts tracing a path by traveling along edges that (other than those edges already pointing to $\mathbf{x}$ in the initialization) are generated in an online manner. The paths traced by the clockwork random walk and this coupled dynamic with a sink at $\mathbf{x}$ must agree at least up to (but not including) the time at which \emph{either} an environment is encountered by a player for the second time \emph{or} the environment is $\mathbf{x}_{-i}$ for some player $i$.

$\langle \p{Z},\mathbf{x},s_\texttt{c} \rangle$ denotes a path generated according to Algorithm \ref{alg6}.

\begin{remark}\label{rem:4}
Consider an arbitrary action profile $\mathbf{x} \in \mathcal{M}$ and let $\langle \p{A}, s_\texttt{c} \rangle$ and $\langle \p{Z}, \mathbf{x}, s_\texttt{c} \rangle$ be generated according to Algorithms \ref{alg2} and \ref{alg6} respectively. Then the distribution of $\langle \p{A}, s_\texttt{c} \rangle$ \emph{conditional} on $\mathbf{x}$ being a pure Nash equilibrium, i.e.\ conditional on $\mathbf{x} \in \text{PNE}(G_{n,\mathbf{m}})$, is the same as the distribution of $\langle \p{Z}, \mathbf{x}, s_\texttt{c} \rangle$.
\end{remark}

\begin{proof}[Proof of Lemma \ref{prop:by_t}]
For any $t\in \mathbb{N}$,
\hypersetup{linkcolor=black}
\begingroup
\allowdisplaybreaks
\begin{align}
\Pr&\left[ \langle \p{A},s_\texttt{c} \rangle \text{ hits PNE$(G_{n,\mathbf{m}})$ by }t  \right] \nonumber \\
&=\sum_{\mathbf{x} \in \mathcal{M}} \Pr\left[ \langle \p{A},s_\texttt{c} \rangle \text{ hits } \{\mathbf{x}\} \text{ by }t \text{ and } \mathbf{x} \in \text{PNE}(G_{n,\mathbf{m}})  \right] \nonumber \\
&=\sum_{\mathbf{x} \in \mathcal{M}} \Pr\left[ \langle \p{A},s_\texttt{c} \rangle \text{ hits } \{\mathbf{x}\} \text{ by }t \, \Big| \, \mathbf{x} \in \text{PNE}(G_{n,\mathbf{m}}) \right] \Pr\left[\mathbf{x} \in \text{PNE}(G_{n,\mathbf{m}}) \right]  \nonumber \\
&=\sum_{\mathbf{x} \in \mathcal{M}} \underbrace{\Pr\left[ \langle \p{Z},\mathbf{x},s_\texttt{c} \rangle \text{ hits } \{\mathbf{x}\} \text{ by }t  \right]}_{(\ref{eq:last}.1)} \underbrace{\Pr\left[\mathbf{x} \in \text{PNE}(G_{n,\mathbf{m}}) \right]}_{(\ref{eq:last}.2)}  \label{eq:last}
\end{align}
\endgroup
\hypersetup{linkcolor=magenta}
The final step follows from Remark \ref{rem:4}; namely, the probability that $\langle \p{A},s_\texttt{c} \rangle$ hits $\{ \mathbf{x} \}$ by time $t$ conditional on $\mathbf{x} \in \text{PNE}(G_{n,\mathbf{m}})$ is equal to the probability that $\langle \p{Z},\mathbf{x}, s_\texttt{c} \rangle$ hits $\{\mathbf{x}\}$ by time $t$. We now analyze the expressions (\ref{eq:last}{\color{magenta}.1}) and (\ref{eq:last}{\color{magenta}.2}).

For (\ref{eq:last}{\color{magenta}.2}), since payoffs are drawn identically and independently according to the atomless distribution $\mathbb{P}$, we have that
\begin{equation}
\Pr\left[\mathbf{x} \in \text{PNE}(G_{n,\mathbf{m}})  \right] = \prod_{i=1}^n \Pr\left[U_{i}\left(\mathbf{x} \right) \geq \max_{x_i' \in [m_i]} U_{i} \left(x_i',\mathbf{x}_{-i}\right) \right] = \frac{1}{\mu} . \label{eq:prod1}
\end{equation}

We now find upper and lower bounds on (\ref{eq:last}{\color{magenta}.1}) by relating $\langle \p{Z}, \mathbf{x}, s_\texttt{c} \rangle$ to the clockwork random walk path $\langle \p{X}, s_\texttt{c} \rangle$. We start with the upper bound. Notice that $\langle \p{Z}, \mathbf{x}, s_\texttt{c} \rangle$ cannot hit $\{\mathbf{x}\}$ by time $t$ unless $\mathbf{X}^{\tau-1}_{-s_\texttt{c}(\tau)} = \mathbf{x}_{-s_\texttt{c}(\tau)} $ for some $\tau \leq t$. Therefore
\begin{align}
\Pr\left[\langle \p{Z}, \mathbf{x}, s_\texttt{c} \rangle \text{ hits }\{\mathbf{x}\} \text{ by }t \right] &\leq  \Pr\left[\bigcup_{\tau=1}^t \{ \mathbf{X}^{\tau-1}_{-s_\texttt{c}(\tau) \}} = \mathbf{x}_{-s_\texttt{c}(\tau)} \} \right] \nonumber \\
&\leq \sum_{\tau=1}^t \Pr\left[ \mathbf{X}^{\tau-1}_{-s_\texttt{c}(\tau)} = \mathbf{x}_{-s_\texttt{c}(\tau)} \right] \nonumber\\
&= \sum_{\tau=1}^t\frac{ m_{s_\texttt{c}(\tau)}}{\mu} \nonumber \\
&\leq \frac{t}{q_{n,\mathbf{m}}} . \label{eq:up1}
\end{align}
The penultimate step follows from the fact that $\mathbf{X}^{\tau-1}_{-s_\texttt{c}(\tau)}$ consists of $n-1$ independent uniform random variables (one action for each player other than $s_\texttt{c}(\tau)$), so $\mathbf{X}^{\tau-1}_{-s_\texttt{c}(\tau)}$ is itself uniformly drawn from $\mathcal{M}_{-s_\texttt{c}(\tau)}$, and $\mathcal{M}_{-s_\texttt{c}(\tau)}$ has cardinality $\frac{\mu}{m_{s_\texttt{c}(\tau)}}$.

We now turn to the lower bound. If $F_{\langle \p{X},s_\texttt{c} \rangle} > t$ and $\mathbf{X}^{\tau-1}_{-s_\texttt{c}(\tau)} = \mathbf{x}_{-s_\texttt{c}(\tau)} $ for some $\tau \leq t$ then $\langle \p{Z}, \mathbf{x}, s_\texttt{c} \rangle$ must hit $\{\mathbf{x}\}$ by time $t$. In other words, if no environments are repeated for any player and the environment is $\mathbf{x}_{-i}$ for some player $i$ by time $t$, then $\langle \p{Z}, \mathbf{x}, s_\texttt{c} \rangle$ must hit $\{\mathbf{x}\}$ by time $t$. Therefore,
\begingroup
\allowdisplaybreaks
\begin{align}
 \Pr\left[\langle \p{Z}, \mathbf{x}, s_\texttt{c} \rangle \text{ hits }\{\mathbf{x}\} \text{ by } t \right] \geq &  \Pr\left[\bigcup_{\tau=1}^t \{\mathbf{X}^{\tau-1}_{-s_\texttt{c}(\tau)} = \mathbf{x}_{-s_\texttt{c}(\tau)}\} \text{ and } F_{\langle \p{X},s_\texttt{c} \rangle} > t \right]  \nonumber \\
 =& \Pr\left[\bigcup_{\tau=1}^t \{\mathbf{X}^{\tau-1}_{-s_\texttt{c}(\tau)} = \mathbf{x}_{-s_\texttt{c}(\tau)}\} \, \bigg| \, F_{\langle \p{X},s_\texttt{c} \rangle} > t \right] \Pr\left[F_{\langle \p{X},s_\texttt{c} \rangle} > t \right] . \label{eq:condition}
 \end{align}
 \endgroup

 To bound the first term in \eqref{eq:condition}, select a player $i$ satisfying $m_i = m^*$ and notice that $\mathbf{X}_{-i}^{h_i(k) - 1} = \mathbf{x}_{-i}$ for some $k \in \{1,...,k^*_i(t)\}$ implies that $\mathbf{X}^{\tau-1}_{-s_\texttt{c}(\tau)} = \mathbf{x}_{-s_\texttt{c}(\tau)} $ for some $\tau \leq t$. Therefore
 \begin{align}
\Pr\left[\bigcup_{\tau=1}^t \{\mathbf{X}^{\tau-1}_{-s_\texttt{c}(\tau)}  = \mathbf{x}_{-s_\texttt{c}(\tau)}\} \, \bigg| \, F_{\langle \p{X},s_\texttt{c} \rangle} > t \right] &\geq \Pr\left[ \bigcup_{k=1}^{k^*_i(t)} \{ \mathbf{X}_{-i}^{h_i(k) - 1} = \mathbf{x}_{-i} \}  \, \bigg| \, F_{\langle \p{X},s_\texttt{c} \rangle} > t\right]  \nonumber \\
&= \sum_{k=1}^{k^*_i(t)} \Pr\left[  \mathbf{X}_{-i}^{h_i(k) - 1} = \mathbf{x}_{-i}   \, \Big| \, F_{\langle \p{X},s_\texttt{c} \rangle} > t \right] \nonumber \\
&= \sum_{k=1}^{k^*_i(t)} \frac{m^*}{\mu} \nonumber \\
&\geq \floor*{\frac{t}{n}} \frac{1}{q_{n,\mathbf{m}}} . \label{eq:down1}
\end{align}
The second line follows from the fact that since all the environments for our chosen player $i$ are distinct, the events in the union are mutually exclusive. The next step follows from the fact that our process is invariant under symmetry. So for any $k \in \{1,...,k^*_i(t)\}$ and for all $\mathbf{x}_{-i}$ and $\mathbf{y}_{-i}$, $\Pr[ \mathbf{X}_{-i}^{h_i(k) - 1} = \mathbf{x}_{-i}  \, | \, F_{\langle \p{X},s_\texttt{c} \rangle} > t] = \Pr[ \mathbf{X}_{-i}^{h_i(k) - 1} = \mathbf{y}_{-i}  \, | \, F_{\langle \p{X},s_\texttt{c} \rangle} > t]$ which implies that $\Pr[ \mathbf{X}_{-i}^{h_i(k) - 1} = \mathbf{x}_{-i}  \, | \, F_{\langle \p{X},s_\texttt{c} \rangle} > t] = \frac{m^*}{\mu} = \frac{1}{q_{n,\mathbf{m}}}$. The last step follows from $k^*_i(t) = 1 + \floor*{\frac{t-i}{n}} \geq 1 + \floor*{\frac{t}{n} - 1} = \floor*{\frac{t}{n}}$.

To bound the second term in \eqref{eq:condition}, notice that if for each $i\in [n]$ the environments $\mathbf{X}_{-i}^{h_i(1) - 1}, \mathbf{X}_{-i}^{h_i(2) - 1},...,\mathbf{X}_{-i}^{h_i(k^*_i(t)) - 1}$ are all distinct then $F_{\langle \p{X},s_\texttt{c} \rangle} > t$. Therefore
\begingroup
\allowdisplaybreaks
  \begin{align}
 \Pr[F_{\langle \p{X},s_\texttt{c} \rangle} > t] &\geq \Pr\left[\bigcap_{i \in [n]} \{\mathbf{X}_{-i}^{h_i(k) - 1} \text{ for } k \in \{1,...,k^*_i(t)\} \text{ are all distinct} \} \right] \nonumber \\
 &= 1 - \Pr\left[\bigcup_{i \in [n]} \{\mathbf{X}_{-i}^{h_i(k) - 1} \text{ for } k \in \{1,...,k^*_i(t)\} \text{ are not all distinct} \} \right]  \nonumber\\
 & \geq 1 - \sum_{i \in [n]} \Pr\left[\mathbf{X}_{-i}^{h_i(k) - 1} \text{ for } k \in \{1,...,k^*_i(t)\} \text{ are not all distinct}\right] \nonumber\\
 & \geq 1- \frac{(\ceil*{\frac{t}{n}})^2}{2} \frac{\sum_{i=1}^n m_i}{\mu} \nonumber \\
& \geq 1- \frac{(\ceil*{\frac{t}{n}})^2}{2} \frac{n}{q_{n,\mathbf{m}}} . \label{eq:down2}
 \end{align}
 \endgroup
The penultimate step follows from Lemma \ref{lem:bounds}.

Gathering the results \eqref{eq:last}, \eqref{eq:prod1}, \eqref{eq:up1}, \eqref{eq:down1}, and \eqref{eq:down2} together yields the desired conclusion.
\end{proof}

\subsection{Results for 2-player games}
In games with $n=2$ players, the action taken by player $s_{\texttt{c}}(t)$ at $t$ corresponds exactly to the environment that player $s_{\texttt{c}}(t+1)$ faces at $t+1$. We take advantage of this property in our proof of Theorem \ref{thm2} below.

\begin{proof}[Proof of Theorem \ref{thm2}]
Let $\eta_t$ denote the probability under the $2$-player clockwork random walk that, by time $t$, no player plays an action that corresponds to an environment that was ever encountered by the other player. For $t \geq 1$ we have
$$ 
\eta_{t+1} = \eta_{t} \times \left( 1 - \frac{\ceil*{\frac{t}{2}}}{m_{s_\texttt{c}(t+1)}}\right).
$$
The term in parentheses is the probability that player $s_\texttt{c}(t+1)$ does not repeat any of the $\ceil*{\frac{t}{2}}$ environments encountered by player $s_\texttt{c}(t)$ by time $t$. Solving with $\eta_1=1$ yields
$$
\eta_t = \prod_{i=1}^{t} \left(1 - \frac{1}{m_{s_\texttt{c}(i)}} \floor*{\frac{i}{2}} \right),
$$
and, evidently, $\eta_t$ is non-negative provided $t \leq 2 m_*$.

For the path to hit a 2$k$-cycle at time $t$, it must be that (i) by time $t+2k-2$, no player plays an action that corresponds to an  environment that was ever encountered by the other player, and (ii) the action taken by player $s_\texttt{c}(t+2k-1)$ at time $t+2k-1$ is equal to the environment encountered by player $s_\texttt{c}(t)$ at time $t$. So, the probability that the clockwork sequence best-response dynamic converges to a $2k$-cycle at time $t$ is $\frac{\eta_{t+2k-2}}{m_{s_\texttt{c}(t+2k-1)}}$, which completes the proof.
\end{proof}

For the remaining proofs, we employ the following standard notation for asymptotics: we write $f(n) = o(g(n))$ if $f(n)/g(n) \rightarrow 0$ as $n \rightarrow \infty$, $f(n) \sim g(n)$ if $f(n)/g(n) \rightarrow 1$ as $n \rightarrow \infty$, and $f(n) =O( g(n)) $ if there is $M>0$ and $N$ such that $| f(n)| \leq Mg(n)$ for all $n \geq N$.

\begin{proof}[Proof of Theorem \ref{thm3}]
$\eta_t$ is precisely the probability that the clockwork best-response dynamic does not hit a 2$k$-cycle (for any $k$) until at least time $t$. With $m_1=m_2=m$ we can write $\eta_t$ as
\begin{equation*}
\prod_{i=1}^{t} \left(1 - \frac{1}{m} \floor*{\frac{i}{2}} \right) = 
\begin{cases}
\frac{m!^2}{( m - \frac{t+1}{2} )!^2 m^{t+1} }  & \text{ if $t$ is odd} \\
\left( \frac{m - \frac{t}{2}}{m} \right) \frac{m!^2}{( m - \frac{t}{2} )!^2 m^{t} } & \text{ if $t$ is even}
\end{cases} .
\end{equation*}
Using Stirling's formula which states that
$$n! \sim \sqrt{2\pi n} \cdot n^n \exp\{-n\} , $$
as $n \rightarrow \infty$, we obtain
\begin{equation}\label{eq:stirling1}
\frac{m!^2}{( m - \frac{t+1}{2} )!^2 m^{t+1} } \sim \left( \frac{m - \frac{t+1}{2}}{m} \right)^{t -2m} \exp \{ - (t+1) \} ,
\end{equation}
and
\begin{equation}\label{eq:stirling2}
\left( \frac{m - \frac{t}{2}}{m} \right) \frac{m!^2}{( m - \frac{t}{2} )!^2 m^{t} } \sim \left( \frac{m - \frac{t}{2}}{m} \right)^{t -2m} \exp \{ - t \} ,
\end{equation}
whenever $m - t \rightarrow \infty$. Taking a logarithm of the last expression we get
\begin{align*}
- t + (t -2m)\ln\left( 1 - \frac{1}{m} \frac{t}{2} \right) &=- t + (t -2m)\left( - \frac{1}{2} \frac{t}{m}  - \frac{1}{8} \frac{t^2}{m^2}  + O\left( \frac{t^3}{m^3}\right) \right) \\
&=  -\frac{1}{4} \frac{t^2}{m} + O \left( \frac{t^3}{m^2} \right).
\end{align*}
Provided that $t = o(m^{2/3})$, the second term goes to zero and therefore equation \eqref{eq:stirling2} behaves asymptotically like $ \exp\{ - t^2 / (4m)\}$. An identical argument shows that, under the same conditions, \eqref{eq:stirling1} is also asymptotically $ \exp\{ - t^2 / (4m)\}$. Hence,
\begin{equation}\label{eq:approx1}
\prod_{i=1}^{t} \left(1 - \frac{1}{m} \floor*{\frac{i}{2}} \right) \sim \exp \left\{ - \frac{t^2}{4m} \right\} .
\end{equation}
This completes the proof of Theorem \ref{thm3}. Note that approximation \eqref{eq:approx1} holds uniformly in the range $[1  , o(m^{2/3})]$.
\end{proof}

\begin{proof}[Proof of Proposition \ref{prop:F1}]
Let $T=T(m)$ satisfy $T = o(m^{2/3})$ and $k = o(T)$. We assume that $T \geq \frac{m^{2/3}}{\ln(m)}$ so that $T$ is not too small, and we split the summation in \eqref{eq:n=2} into two ranges: $t \leq T$ and $t > T$. Since \eqref{eq:approx1} holds uniformly in our first range, we have
\begin{align*}
\frac{1}{m} \sum_{t=1}^{T}  \prod_{i=1}^{t+2(k-1)} \left( 1 - \frac{1}{m} \floor*{\frac{i}{2}}  \right) \sim  \frac{1}{m} \sum_{t=1}^{T} \exp \left\{ -  \frac{( t + 2(k-1) )^2}{4m} \right\}.
\end{align*}
We now approximate the summation on the right-hand side with an integral. Firstly, note that
\begin{align}
\frac{1}{m} \int_{1}^{T+1} \exp \left\{ -  \frac{( t + 2(k-1) )^2}{4m} \right\} dt &= \sqrt{\frac{2}{m}} \int_{\frac{2k -1 }{\sqrt{2m}}}^{\frac{T+1 + 2(k-1)}{\sqrt{2m}}} \exp \left\{ -  \frac{x^2}{2} \right\} dx \nonumber \\
&\sim  \sqrt{\frac{2}{m}} \int_{\frac{2k -1 }{\sqrt{2m}}}^{\infty} \exp \left\{ -  \frac{x^2}{2} \right\} dx \nonumber \\
&=2 \sqrt{\frac{\pi}{m}} \left( 1 - \Phi\left( \frac{2k -1}{\sqrt{2m}}  \right)\right) , \label{eq:lim_int}
\end{align}
where the first step uses the transformation $x = (t + 2(k-1))/\sqrt{2m}$. Furthermore,
\begin{align*}
\frac{1}{m} \int_{0}^1 \exp \left\{ -  \frac{( t + 2(k-1) )^2}{4m} \right\} dt \leq \frac{1}{m} ,
\end{align*}
which goes to zero faster than \eqref{eq:lim_int}. Since
$$\int_1^{T+1} f(t) dt \leq \sum_{t=1}^T f(t) \leq \int_{0}^T f(t) dt \leq \int_1^{T+1} f(t) dt + \int_{0}^1 f(t) dt ,$$
for any positive and decreasing function $f(\cdot)$, it follows that
$$\frac{1}{m} \sum_{t=1}^{T} \exp \left\{ -  \frac{( t + 2(k-1) )^2}{4m} \right\} \sim  2 \sqrt{\frac{\pi}{m}} \left( 1 - \Phi\left( \frac{2k -1}{\sqrt{2m}}  \right)\right). $$
It remains for us to show that the summation \eqref{eq:n=2} over the second range is negligible. Since $\exp\{x\} \geq 1+x$ and $\floor{x} > x -1$ for all $x$, we obtain the following upper bound:
\begingroup
\allowdisplaybreaks
\begin{align*}
\frac{1}{m} \sum_{t=T+1}^{2(m-k+1)} \;  \prod_{i=1}^{t+2(k-1)} \left( 1 - \frac{1}{m} \floor*{\frac{i}{2}}  \right)  &\leq \frac{1}{m} \sum_{t=T+1}^{2(m-k+1)} \; \prod_{i=1}^{T+1+2(k-1)} \left( 1 - \frac{1}{m} \floor*{\frac{i}{2}}  \right) \\
&\leq \frac{1}{m} \sum_{t=T+1}^{2(m-k+1)} \exp \left\{ - \frac{1}{m} \sum_{i=1}^{T+1 + 2(k-1)}\left( \frac{i}{2} -1 \right) \right\} \\
&\leq \frac{2m -2k - T+1}{m} \exp \left\{ - \frac{1}{4m} \left(T + 2(k-1) -2  \right)^2  \right\} .
\end{align*}
\endgroup
This expression is small compared to the other half of the sum.
\end{proof}

\def\bibfont{\footnotesize}
\bibliographystyle{chicago}
\bibliography{references}

\end{document}